\documentclass{sigplanconf}
\usepackage{amsmath,amssymb,latexsym,amsthm}
\usepackage{upgreek}
\usepackage[
  pdftitle={Abstracting Abstract Machines},
  pdfauthor={David Van Horn and Matthew Might},
  pdfsubject={Computer Science},
  pdfkeywords={abstract machines, abstract interpretation},
  breaklinks=true
]
{hyperref}
%


\newcommand{\ie}{\emph{i.e.}}
\newcommand{\eg}{\emph{e.g.}}

\newcommand{\etal}{\emph{et al.}}

\newcommand{\syn}[1]{\mathit{#1}} 
\newcommand{\var}[1]{\mathit{#1}}
\newcommand{\s}[1]{\mathit{#1}}


\newcommand{\parto}{\rightarrow_{\mathrm{fin}}}
\newcommand{\dom}{\var{dom}}


\newcommand{\set}[1]{\left\{#1\right\}}
\newcommand{\setbuild}[2]{\left\{ #1 : #2\right\}}

\newcommand{\Pow}[1]{{\mathcal{P}\left(#1\right)}}

\newcommand{\union}{\cup}




\newcommand{\wt}{\sqsubseteq}

\newcommand{\join}{\sqcup}

\newcommand{\bigjoin}{\bigsqcup}

\DeclareMathOperator{\lfp}{lfp}





\newenvironment{grammar}{\begin{array}{r@{\;}c@{\;}l@{\;}c@{\;}l@{\;\;\;\;}l}}{\end{array}}

\newcommand{\opor}{\mathrel{|}}

\newcommand{\produces}{\mathrel{::=}}


\newcommand{\vv}{x}








\newcommand{\schfalse}{{\mbox{\tt\#f}}}

\newcommand{\ttlp}{\mbox{\tt (}}
\newcommand{\ttrp}{\mbox{\tt )}}

\newcommand{\appform}[2]{\ttlp #1 #2\ttrp}
\newcommand{\lamform}[2]{\ttlp \uplambda #1.#2\ttrp}

\newcommand{\ifform}[3]{\ttlp {\tt if}\; #1\; #2\; #3\ttrp}

\newcommand{\throwform}[1]{\ttlp {\tt throw}\; #1\ttrp}





\newcommand{\lab}{\ell}

\newcommand{\expr}{e}




\newcommand{\State}{\Sigma}
\newcommand{\state}{\varsigma}

\newcommand{\Env}{\s{Env}}

\newcommand{\Den}{\s{Val}} 


\newcommand{\store}{\sigma}

\newcommand{\env}{\rho}

\newcommand{\cont}{\kappa}

\newcommand{\alloc}{\mathit{alloc}}

\newcommand{\addr}{a}

\newcommand{\tm}{t}
\newcommand{\tick}{{{tick}}}




\newcommand{\sa}[1]{\widehat{\mathit{#1}}}

\newcommand{\aState}{{\hat{\Sigma}}}
\newcommand{\astate}{{\hat{\varsigma}}}


\newcommand{\astore}{{\hat{\sigma}}}

\newcommand{\aaddr}{{\hat{\addr}}}
\newcommand{\aalloc}{{\widehat{alloc}}}

\newcommand{\atick}{{\widehat{tick}}}

\newcommand{\absmap}{\alpha}
\newcommand{\abs}[1]{|#1|}


\newcommand\CESP{CESK$^\star$}

\newcommand{\multistep}{\longmapsto\!\!\!\!\!\rightarrow}

\newcommand{\handls}{\eta}
\newcommand{\handk}{\mathbf{hn}}

\newcommand{\catchform}[2]{\ttlp {\tt catch}\; #1\; #2\ttrp}

\newcommand\tmnext{u}
\newcommand\haddr{h}
\newcommand\addrnext{b}
\newcommand\addrnextnext{c}
\newcommand\Storable{Storable}
\newcommand\sto{s}
\newcommand\den{v}

\newcommand\Perm{\mathcal{P}}

\newcommand\mtk{\mathbf{mt}}
\newcommand\fnk{\mathbf{fn}}
\newcommand\ark{\mathbf{ar}}
\newcommand\kok{\mathbf{c_1}}
\newcommand\ktk{\mathbf{c_2}}
\newcommand\computed{\mathbf{c}}
\newcommand\delayed{\mathbf{d}}

\newcommand\mte{\emptyset} 
\newcommand\mts{\emptyset} 

\newcommand\liveloc{\mathcal{LL}}

\newcommand\no{\text{deny}}
\newcommand\grant{\text{grant}}

\newcommand\OK{\mathcal{OK}}
\newcommand\AOK{\widehat{\mathcal{OK}^\star}}

\newcommand\betavalue{{\beta_{\mathsf{v}}}}

\newcommand\grantform[2]{\ttlp{\tt grant}\; #1\; #2\ttrp}
\newcommand\testform[3]{\ttlp{\tt test}\; #1\; #2\; #3\ttrp}
\newcommand\failexpr{{\tt fail}}
\newcommand\frameform[2]{\ttlp{\tt frame}\; #1\; #2\ttrp}

\newcommand\evf{\mathit{eval}} 
\newcommand\avf{\mathit{aval}} 
\newcommand\inj{\mathit{inj}} 
\newcommand\fv{\mathbf{fv}}

\newcommand\restrict[2]{#1 | #2}

\newtheorem{theorem}{Theorem}
\newtheorem{lemma}{Lemma}

\begin{document}

\conferenceinfo{ICFP'10,} {September 27--29, 2010, Baltimore, Maryland, USA.}
\CopyrightYear{2010}
\copyrightdata{978-1-60558-794-3/10/09}

\title{Abstracting Abstract Machines}

\authorinfo{David {Van Horn}\thanks{Supported by the National Science
    Foundation under grant 0937060 to the Computing Research
    Association for the CIFellow Project.}}
           {Northeastern University}
           {dvanhorn@ccs.neu.edu}

\authorinfo{Matthew Might}
           {University of Utah}
           {might@cs.utah.edu}
\maketitle

\begin{abstract}
  We describe a derivational approach to abstract interpretation that
  yields novel and transparently sound static analyses when applied to
  well-established abstract machines.  To demonstrate the technique
  and support our claim, we transform the CEK machine of Felleisen and
  Friedman, a lazy variant of Krivine's machine, and the
  stack-inspecting CM machine of Clements and Felleisen into abstract
  interpretations of themselves.  The resulting analyses bound
  temporal ordering of program events; predict return-flow and
  stack-inspection behavior; and approximate the flow and evaluation
  of by-need parameters.  For all of these machines, we find that a
  series of well-known concrete machine refactorings, plus a technique
  we call store-allocated continuations, leads to machines that
  abstract into static analyses simply by bounding their stores.  We
  demonstrate that the technique scales up uniformly to allow static
  analysis of realistic language features, including tail calls,
  conditionals, side effects, exceptions, first-class continuations,
  and even garbage collection.
\end{abstract}

\category{F.3.2}{Logics and Meanings of Programs}{Semantics of
  Programming Languages}[Program analysis, Operational semantics]
\category{F.4.1}{Mathematical Logic and Formal Languages}{Mathematical
  Logic}[Lambda calculus and related systems]

\terms
Languages, Theory

\keywords abstract machines, abstract interpretation

\section{Introduction}

Abstract machines such as the CEK machine and Krivine's machine are
first-order state transition systems that represent the core of a real
language implementation.
Semantics-based program analysis, on the other hand, is concerned with
safely approximating intensional properties of such a machine as it
runs a program.
It seems natural then to want to systematically derive analyses from
machines to approximate the core of realistic run-time systems.

Our goal is to develop a technique that enables direct abstract
interpretations of abstract machines by methods for transforming a
given machine description into another that computes its finite
approximation.

%
%
 
We demonstrate that the technique of refactoring a machine with
\textbf{store-allocated continuations} allows a direct structural abstraction\footnote{
  A structural abstraction 
  distributes 
  component-, 
  point-, 
  and
  member-wise. 
} by
bounding the machine's store.
Thus, we are able to convert semantic techniques used to model
language features into static analysis techniques for reasoning about
the behavior of those very same features.
By abstracting well-known machines, our technique delivers static
analyzers that can reason about by-need evaluation, higher-order
functions, tail calls, side effects, stack structure, exceptions and
first-class continuations.

The basic idea behind store-allocated continuations is not new.
SML/NJ has allocated continuations in the heap for well over a
decade~\cite{dvanhorn:Shao1994Spaceefficient}.
At first glance, modeling the program stack in an abstract machine
with store-allocated continuations would not seem to provide any real
benefit.
Indeed, for the purpose of defining the meaning of a program, there is
no benefit, because the meaning of the program does not depend on the
stack-implementation strategy.
Yet, a closer inspection finds that store-allocating continuations
eliminate recursion from the definition of the state-space of the
machine.
With no recursive structure in the state-space, an abstract machine
becomes eligible for conversion into an abstract interpreter through
a simple structural abstraction.


To demonstrate the applicability of the approach, we derive abstract
interpreters of:
\begin{itemize}
\item a call-by-value $\lambda$-calculus with state and control based
  on the CESK machine of Felleisen and
  Friedman~\cite{dvanhorn:Felleisen1987Calculus},

\item a call-by-need $\lambda$-calculus based on a tail-recursive,
  lazy variant of Krivine's machine derived by Ager, Danvy and
  Midtgaard~\cite{dvanhorn:Ager2004Functional}, and

\item a call-by-value $\lambda$-calculus with stack inspection based
  on the CM machine of Clements and
  Felleisen~\cite{dvanhorn:Clements2004Tailrecursive};
\end{itemize}
and use abstract  garbage
collection to improve 
precision~\cite{mattmight:Might:2006:GammaCFA}.

\subsection*{Overview}

In Section~\ref{sec:cek-to-acesk}, we begin with the CEK machine and
attempt a structural abstract interpretation, but find ourselves
blocked by two recursive structures in the machine: environments and
continuations.
We make three refactorings to:
\begin{enumerate}
\item store-allocate bindings,
\item store-allocate continuations, and 
\item time-stamp machine states;
\end{enumerate}
resulting in the CESK, CESK$^\star$, and time-stamped CESK$^\star$
machines, respectively.
The time-stamps encode the history (context) of the machine's
execution and facilitate context-sensitive abstractions.
We then demonstrate that the time-stamped machine abstracts directly
into a parameterized, sound and computable static analysis.

In Section~\ref{sec:krivine}, we replay this process (slightly
abbreviated) with a lazy variant of Krivine's machine to arrive at a
static analysis of by-need programs.

In Section~\ref{sec:realistic-features}, we incorporate
conditionals, side effects, exceptions,
first-class continuations, and garbage collection.

In Section~\ref{sec:CM}, we abstract the CM (continuation-marks)
machine to produce an abstract interpretation of stack inspection.

In Section~\ref{sec:widening}, we widen the abstract interpretations
with a single-threaded ``global'' store to accelerate convergence.
For some of our analyzers, this widening results in polynomial-time
algorithms and connects them back to known analyses.

\section{From CEK to the abstract CESK\texorpdfstring{$^\star$}{*}}
\label{sec:cek-to-acesk}

In this section, we start with a traditional machine for a programming
language based on the call-by-value $\lambda$-calculus, and gradually
derive an
abstract interpretation of this machine.
%
The outline followed in this section covers the basic steps for
systematically deriving abstract interpreters that we follow
throughout the rest of the paper.

To begin, consider the following language of expressions:%
\footnote{\emph{Fine print on syntax:}
As is often the case in program analysis where \emph{semantic values}
are approximated using \emph{syntactic phrases} of the program under
analysis, we would like to be able to distinguish different syntactic
occurrences of otherwise identical expressions within a
program.  Informally, this means we want
to track the source location of expressions.  Formally, this is
achieved by \emph{labeling} expressions and assuming all labels within
a program are distinct:
\[
\begin{grammar}
  \expr &\in& \syn{Exp} &\produces& 
  \vv^\lab \opor \appform{\expr}{\expr}^\lab \opor \lamform{\vv}{\expr}^\lab
\\
\lab &\in& \syn{Lab} & & \text{an infinite set of labels}
\text.
\end{grammar}
\]
However, we judiciously omit labels whenever they are irrelevant and
doing so improves the clarity of the presentation. 
Consequently, they appear only in Sections~\ref{sec:labelled-acesk}
and~\ref{sec:widening}, which are concerned with $k$-CFA.
}
\[
\begin{grammar}
  \expr &\in& \syn{Exp} &\produces& 
 \vv \opor \appform{\expr}{\expr} \opor   \lamform{\vv}{\expr}
\\
  \vv &\in& \syn{Var} & &\text{ a set of identifiers}
  \text.
\end{grammar}
\]

A standard machine for evaluating this language is the CEK machine of
Felleisen and Friedman~\cite{mattmight:Felleisen:1986:CEK}, and it is
from this machine we derive the abstract semantics---a computable
approximation of the machine's behavior.
Most of the steps in this derivation correspond to
well-known machine transformations and real-world implementation
techniques---and most of these steps are concerned only with the
\emph{concrete machine}; a very simple abstraction is employed only at
the very end.

The remainder of this section is outlined as follows: we present the
CEK machine, to which we add a store, and use it to allocate variable
bindings.  This machine is just the CESK machine of Felleisen and
Friedman~\cite{dvanhorn:Felleisen1987Calculus}.  From here, we further
exploit the store to allocate continuations, which corresponds to a
well-known implementation technique used in functional language
compilers~\cite{dvanhorn:Shao1994Spaceefficient}.  We then abstract
\emph{only the store} to obtain a framework for the sound, computable
analysis of programs.

\subsection{The CEK machine}
\label{sec:cek}

A standard approach to evaluating programs is to rely on a
Curry-Feys-style Standardization Theorem, which says roughly: if an
expression $\expr$ reduces to $\expr'$ in, \eg, the call-by-value
$\lambda$-calculus, then $\expr$ reduces to $\expr'$ in a canonical
manner.  This canonical manner thus determines a state machine for
evaluating programs: a standard reduction machine.

To define such a machine for our language, we define a grammar of
evaluation contexts and notions of reduction (\eg, $\betavalue$).  An
evaluation context is an expression with a ``hole'' in it.  For
left-to-right evaluation order, we define evaluation contexts $E$ as:
\[
\begin{grammar}
 && E &\produces& [\;] \opor \appform{E}{\expr} \opor \appform{\den}{E}\text.
\end{grammar}
\]
An expression is either a value or uniquely decomposable into an
evaluation context and redex.  The standard reduction machine is:
\[
E[\expr] \longmapsto_\betavalue E[\expr'],\mbox{ if } \expr\;\betavalue\;\expr'\text.
\]
However, this machine does not shed much light on a realistic
implementation.  At each step, the machine traverses the entire
source of the program looking for a redex.  When found, the redex is
reduced and the contractum is plugged back in the hole, then the
process is repeated.

Abstract machines such as the CEK machine, which are derivable from
standard reduction machines, offer an extensionally equivalent but
more realistic model of evaluation that is amenable to efficient
implementation.  The CEK is environment-based; it uses environments
and closures to model substitution.  It represents evaluation contexts
as \emph{continuations}, an inductive data structure that models
contexts in an inside-out manner.  The key idea of machines such as
the CEK is that the whole program need not be traversed to find the
next redex, consequently the machine integrates the process of
plugging a contractum into a context and finding the next redex.

States of the CEK machine~\cite{mattmight:Felleisen:1986:CEK}
consist of a control string (an expression), an environment that
closes the control string, and a continuation:
\[
\begin{grammar}
 \state &\in& \State &=& 
  \syn{Exp} \times \s{Env} \times \s{Kont} 
\\
   \den &\in& \s{Val} &\produces& \lamform{\vv}{\expr}
\\
  \env &\in &\s{Env} &=& \syn{Var} \parto \s{Val} \times \s{Env}
\\
  \kappa &\in& \s{Kont} &\produces&
  \mtk \opor \ark(\expr,\env,\kappa) \opor \fnk(\den,\env,\kappa)\text.
\end{grammar}
\]
States are identified up to consistent renaming of bound variables.

Environments are finite maps from variables to closures.
Environment extension is written $\env[\vv\mapsto(\den,\env')]$.

Evaluation contexts $E$ are represented (inside-out) by continuations
as follows: $[\; ]$ is represented by $\mtk$; $E[\appform{[\;
  ]}{\expr}]$ is represented by $\ark(\expr',\env,\cont)$ where $\env$
closes $\expr'$ to represent $\expr$ and $\cont$ represents $E$;
$E[\appform{\den}{[\; ]}]$ is represented by $\fnk(\den',\env,\cont)$
where $\env$ closes $\den'$ to represent $\den$ and $\cont$ represents $E$.

\begin{figure}
\[
\begin{array}{@{}l@{\qquad}|@{\qquad}r@{}}
\multicolumn{2}{c}{\state \longmapsto_{\mathit{CEK}} \state'} \\[1mm]
\hline 
\\
\langle\vv, \env, \cont\rangle &
\langle\den,\env', \cont\rangle
 \mbox{ where }\env(\vv) = (\den,\env') 
\\[1mm]
\langle\appform{\expr_0}{\expr_1}, \env, \cont\rangle &
\langle\expr_0, \env, \ark(e_1, \env, \cont)\rangle
\\[1mm]
\langle\den,\env, \ark(e,\env',\cont)\rangle &
\langle\expr,\env',\fnk(\den,\env,\cont)\rangle
\\[1mm]
\langle\den,\env, \fnk(\lamform{\vv}{\expr},\env',\cont)\rangle &
\langle\expr,\env'[\vv \mapsto (\den,\env)], \cont\rangle
\end{array}
\]
\caption{The CEK machine.}
\label{fig:cek}
\end{figure}

The transition function for the CEK machine is defined in
Figure~\ref{fig:cek} (we follow the textbook treatment of the CEK
machine~\cite[page 102]{dvanhorn:Felleisen2009Semantics}).
The initial machine state for a closed expression $\expr$
is given by the $\inj$ function:
\[
\inj_{\mathit{CEK}}(\expr) = \langle\expr,\mte,\mtk\rangle\text.
\]
Typically, an evaluation function is defined as a partial function from
closed expressions to answers:
\[
\evf'_{\mathit{CEK}}(e) =
 (\den,\env)
 \mbox{ if }
 \inj(\expr) \multistep_{\mathit{CEK}} 
 \langle\den,\env,\mtk\rangle\text.
\]
This gives an extensional view of the machine, which is useful,
\eg, to prove correctness with respect to a canonical
evaluation function such as one defined by standard reduction or
compositional valuation.
However for the purposes of program analysis, we are concerned more
with the intensional aspects of the machine.  As such, 
we define the meaning of a program as the (possibly
infinite) set of reachable machine states:
\[
\evf_{\mathit{CEK}}(\expr) = \{ \state\ |\  \inj(\expr) \multistep_{\mathit{CEK}} \state \}
\text.
\]

Deciding membership in the set of reachable machine states is not
possible due to the halting problem.  The goal of abstract
interpretation, then, is to construct a function,
$\avf_{\widehat{\mathit{CEK}}}$, that is a sound and computable
approximation to the $\evf_{\mathit{CEK}}$ function.

We can do this by constructing a machine that is similar in structure to
the CEK machine:
it is defined by an \emph{abstract state transition} relation
$(\longmapsto_{\widehat{\mathit{CEK}}}) \subseteq
\aState \times
\aState$, which operates over \emph{abstract states}, $\hat{\State}$,
which approximate the states of the CEK machine,
and an abstraction map $\absmap : \State \to \aState$ that maps
concrete machine states into abstract machine states.

The abstract evaluation function is then defined as:
\[
\avf_{\widehat{\mathit{CEK}}}(\expr) = \{ \astate\ |\ \alpha(\inj(\expr))
\multistep_{\widehat{\mathit{CEK}}} \astate \}
\text.
\]

\begin{enumerate}
\item We achieve \emph{decidability} by constructing the approximation in
such a way that the state-space of the abstracted machine is
finite, which guarantees that for any closed expression $\expr$, the
set $\avf(\expr)$ is finite.

\item We achieve \emph{soundness} by demonstrating the abstracted machine
transitions preserve the abstraction map, so that if
 \(\state \longmapsto \state' \text{ and }  \alpha(\state)
  \wt \astate\),
then there exists an abstract state $\astate'$ such that
  \(\astate
\longmapsto  \astate' \text{ and } \alpha(\state') \wt \astate'\).
\end{enumerate}

\paragraph{A first attempt at abstract interpretation:}
%
A simple approach to abstracting the machine's state space is to apply
a {\em structural abstract interpretation}, which lifts abstraction
point-wise, element-wise, component-wise and member-wise across the
structure of a machine state (\ie,~expressions, environments,
and continuations).

The problem with the structural abstraction approach for the CEK
machine is that both environments and continuations are recursive
structures.
As a result, the map $\absmap$ yields objects in an abstract
state-space with recursive structure, implying the space is infinite.
It is possible to perform abstract interpretation over 
an infinite state-space, but it requires a widening operator.
A widening operator accelerates the ascent up the lattice of
approximation and must guarantee convergence.
It is difficult to imagine a widening operator, other than the one
that jumps immediately to the top of the lattice, for these semantics.

Focusing on recursive structure as the source of the problem, a
reasonable course of action is to add a level of indirection to the
recursion---to force recursive structure to pass through explicitly
allocated addresses.
In doing so, we will unhinge recursion in a program's data structures
and its control-flow from recursive structure in the state-space.

We turn our attention next to the CESK
machine~\cite{mattmight:Felleisen:1987:Dissertation,dvanhorn:Felleisen1987Calculus},
since the CESK machine eliminates recursion from one of the structures
in the CEK machine: environments.
In the subsequent section (Section~\ref{sec:ceskp}), we will develop a CESK machine with a pointer
refinement (\CESP{}) that eliminates the other source of recursive structure:
continuations.
At that point, the machine structurally abstracts via a single point
of approximation: the store.

\subsection{The CESK machine}
\label{sec:cesk}

The states of the CESK machine extend those of the CEK machine to
include a \emph{store}, which provides a level of indirection for
variable bindings to pass through.
The store is a finite map from \emph{addresses} to \emph{storable
  values} and 
environments are changed to map variables to addresses.
When a variable's value is looked-up by the machine, it is now
accomplished by using the environment to look up the variable's
address, which is then used to look up the value.
To bind a variable to a value, a fresh location in the store is
allocated and mapped to the value; the
environment is extended to map the variable to that address.

The state space for the CESK machine is defined as follows:
\[
\begin{grammar}
  \state &\in& \State &=& 
  \syn{Exp} \times \s{Env} \times \s{Store} \times \s{Kont}
  \\
  \env &\in &\s{Env} &=& \syn{Var} \parto \s{Addr}
  \\
  \store &\in &\s{Store} &=& \s{Addr} \parto \Storable
  \\
  \sto &\in &\Storable &=& \Den \times \Env
  \\
  \addr,\addrnext,\addrnextnext &\in &\s{Addr} & & \text{ an infinite set}
  \text.
\end{grammar}
\]
States are identified up to consistent renaming of bound variables and
addresses.
The transition function for the CESK machine is defined in
Figure~\ref{fig:cesk} (we follow the textbook treatment of the CESK
machine~\cite[page 166]{dvanhorn:Felleisen2009Semantics}).
\begin{figure}
\[
\begin{array}{@{}l@{\;}|@{\;}r@{}}

\multicolumn{2}{c}{\state \longmapsto_{\mathit{CESK}} \state'} \\[1mm]
\hline
\\
\langle\vv, \env, \store, \cont\rangle &
\langle\den,\env', \store, \cont\rangle \mbox{ where }\store(\env(\vv)) = (\den,\env')
\\[1mm]
\langle\appform{\expr_0}{\expr_1}, \env, \store, \cont\rangle &
\langle\expr_0, \env, \store, \ark(e_1, \env, \cont)\rangle
\\[1mm]
\langle\den,\env,\store, \ark(e,\env',\cont)\rangle &
\langle\expr,\env',\store,\fnk(\den,\env,\cont)\rangle
\\[1mm]
\langle\den,\env,\store, \fnk(\lamform{\vv}{\expr},\env',\cont)\rangle 
&
\langle\expr,\env'[\vv \mapsto \addr], \store[\addr \mapsto (\den,\rho)], \cont\rangle
\\
&
\mbox{ where } \addr\notin\dom(\store)
\end{array}
\]
\caption{The CESK machine.}
\label{fig:cesk}
\end{figure}

The initial state for a closed expression is given by the $\inj$
function, which combines the expression with the empty environment,
store, and continuation:
\[
\inj_{\mathit{CESK}}(\expr) = \langle\expr,\mte,\mts,\mtk\rangle\text.
\]
The $\evf_{\mathit{CESK}}$ evaluation function is defined
following the template of the CEK evaluation given in
Section~\ref{sec:cek}:
\[
\evf_{\mathit{CESK}}(\expr) = \{ \state\ |\  \inj(\expr) \multistep_{\mathit{CESK}} \state \}
\text.
\]
Observe that for any closed expression, the CEK and CESK machines
operate in lock-step: each machine transitions, by the corresponding
rule, if and only if the other machine transitions.
\begin{lemma}[Felleisen, \cite{mattmight:Felleisen:1987:Dissertation}]
\label{lem:store-equiv}
$\evf_{\mathit{CESK}}(e) \simeq \evf_{\mathit{CEK}}(e)$.
\end{lemma}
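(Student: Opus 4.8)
The plan is to establish a tight bisimulation between CEK configurations and CESK configurations. Since the statement asserts $\evf_{\mathit{CESK}}(e) \simeq \evf_{\mathit{CEK}}(e)$ — that is, agreement of the sets of reachable states up to the obvious correspondence — the natural strategy is to exhibit a relation $\approx$ between CEK states and CESK states such that (i) $\inj_{\mathit{CEK}}(e) \approx \inj_{\mathit{CESK}}(e)$, and (ii) $\approx$ is a \emph{strong} bisimulation: $\state_1 \approx \state_2$ implies that $\state_1 \longmapsto_{\mathit{CEK}} \state_1'$ iff $\state_2 \longmapsto_{\mathit{CESK}} \state_2'$ for states with $\state_1' \approx \state_2'$, and moreover each side's transition is by the ``corresponding rule.'' This last point is exactly the lock-step observation already noted in the text just before the lemma, so the real content is defining $\approx$ correctly and checking it is preserved.

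The relation $\approx$ should be defined by \emph{flattening} a CESK store into substitutions, or equivalently by a simultaneous recursion that maps a CESK environment-plus-store pair to a CEK environment: define $\mathit{flat}_\store(\env)(\vv) = (\den, \mathit{flat}_\store(\env'))$ where $\store(\env(\vv)) = (\den,\env')$, and extend $\mathit{flat}_\store$ homomorphically to continuations ($\mtk \mapsto \mtk$, $\ark(e,\env,\cont) \mapsto \ark(e,\mathit{flat}_\store(\env),\mathit{flat}_\store(\cont))$, similarly for $\fnk$). Then declare $\langle \expr, \env_{\mathrm{CEK}}, \cont_{\mathrm{CEK}} \rangle \approx \langle \expr, \env, \store, \cont \rangle$ when $\env_{\mathrm{CEK}} = \mathit{flat}_\store(\env)$ and $\cont_{\mathrm{CEK}} = \mathit{flat}_\store(\cont)$. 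One then proceeds by cases on the four transition rules. The variable-lookup, application, and $\ark$/$\fnk$-shuffle rules are immediate since they do not touch the store and $\mathit{flat}_\store$ commutes with the continuation manipulations. The only rule that modifies the store is function application ($\fnk$ with a $\lambda$): here the CESK machine picks a fresh $\addr \notin \dom(\store)$ and sets $\store' = \store[\addr \mapsto (\den,\env)]$, $\env'' = \env'[\vv \mapsto \addr]$; one must check $\mathit{flat}_{\store'}(\env'') = \mathit{flat}_\store(\env')[\vv \mapsto (\den, \mathit{flat}_\store(\env))]$, which holds precisely because $\addr$ is fresh, so extending the store does not disturb the flattening of any address already reachable from $\env'$, $\cont$, or $\den$.

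The main obstacle is the handling of freshness and the ``identified up to renaming of addresses'' convention. Because the CESK machine chooses an arbitrary fresh address, the forward direction really produces a \emph{family} of successor states, all equal up to address renaming; one needs the well-formedness invariant that in every reachable CESK state the store domain contains every address mentioned in $\env$, $\store$'s range, and $\cont$ (no dangling pointers), so that $\mathit{flat}_\store$ is total and the freshness argument goes through. Establishing this invariant is a routine induction along the transition relation, carried out in parallel with the bisimulation proof. For the backward direction — given a CESK step, produce the matching CEK step — there is no choice to make, so it is the easier inclusion; but to conclude $\simeq$ on the final answer sets $\{(\den,\env)\}$ versus $\{(\den,\env,\store)\}$ one reads off that a CEK halting state $\langle \den, \env_{\mathrm{CEK}}, \mtk\rangle$ corresponds to a CESK halting state whose flattened value/environment is exactly $(\den,\env_{\mathrm{CEK}})$, so the answer relation is the one induced by $\mathit{flat}$. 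Since Felleisen's dissertation already contains this result, I would cite it for the detailed store-unloading argument and present only the bisimulation relation and the key freshness case in full.
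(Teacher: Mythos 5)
Your proposal is correct and is essentially the argument the paper relies on: the paper merely notes the lock-step correspondence and defers to Felleisen's dissertation, and your flattening (store-unloading) bisimulation with the freshness case singled out is exactly that cited proof. One refinement: for $\mathit{flat}_\store$ to be total you need not only the no-dangling-pointer invariant but also that reachable stores are acyclic (which holds because every allocation is fresh), so that the recursive flattening terminates; fold that into the invariant you establish by induction on the transition relation.
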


\paragraph{A second attempt at abstract interpretation:}
With the CESK machine, half the problem with the attempted na\"ive
abstract interpretation is solved: environments and closures are no
longer mutually recursive.
Unfortunately, continuations still have recursive structure.
We could crudely abstract a continuation into a set of frames, losing
all sense of order, but this would lead to a static analysis lacking
faculties to reason about return-flow: every call would appear to
return to every other call.
A better solution is to refactor continuations as we did environments,
redirecting the recursive structure through the store.
In the next section, we explore a CESK machine with a pointer
refinement for continuations.

\subsection{The CESK\texorpdfstring{$^\star$}{*} machine} 
\label{sec:ceskp}
To untie the recursive structure associated with continuations, we
shift to store-allocated continuations. 
The $\s{Kont}$ component of the machine is replaced by a \emph{pointer
  to} a continuation allocated in the store.  
We term the resulting machine the \CESP{} (control, environment,
store, continuation pointer) machine.
Notice the store now maps to denotable values \emph{and} continuations:
\[
\begin{grammar}
  \state &\in& \State &=& 
  \syn{Exp} \times \s{Env} \times \s{Store} \times \s{Addr}
  \\
  \sto &\in& \Storable &=& \Den \times \Env + \s{Kont}
  \\
  \kappa &\in& \s{Kont} &\produces&
  \mtk \opor \ark(\expr,\env,\addr) \opor \fnk(\den,\env,\addr)\text.
\end{grammar}
\]

The revised machine is defined in
Figure~\ref{fig:cesa} and the initial machine state is defined as:
\[
\inj_{\mathit{CESK^\star}}(\expr) = \langle\expr,\mte,[\addr_0 \mapsto \mtk],\addr_0\rangle\text.
\]

\begin{figure}
\[
\begin{array}{@{}l|r@{}}

\multicolumn{2}{c}{\state \longmapsto_{\mathit{CESK}^\star} \state'
\text{, where }\cont=\store(\addr),\addrnext \notin \dom(\sigma)} \\[1mm]
\hline & \\

\langle\vv, \env, \store, \addr\rangle &
\langle\den,\env', \store, \addr\rangle
\mbox{ where }(\den,\env') = \store(\env(\vv))
\\[1mm]
\langle\appform{\expr_0}{\expr_1}, \env, \store, \addr\rangle &
\langle\expr_0, \env, \store[\addrnext\mapsto \ark(\expr_1, \env, \addr)], \addrnext\rangle
\\[1mm]
\langle\den,\env,\store, \addr\rangle 
\\
\mbox{ if }\cont = \ark(\expr,\env',\addrnextnext) 
&
\langle\expr,\env',\store[\addrnext\mapsto \fnk(\den,\env,\addrnextnext)],\addrnext\rangle
\\
\mbox{ if } \cont = \fnk(\lamform{\vv}{\expr},\env',\addrnextnext) 
&
\langle\expr,\env'[\vv \mapsto \addrnext], \store[\addrnext \mapsto (\den,\env)], \addrnextnext\rangle
\end{array}
\]
\caption{The CESK$^\star$ machine.}
\label{fig:cesa}
\end{figure}
The evaluation function (not shown) is defined along the same lines as
those for the CEK (Section~\ref{sec:cek}) and CESK
(Section~\ref{sec:cesk}) machines.
Like the CESK machine, it is easy to relate the CESK$^\star$ machine
to its predecessor; from corresponding initial configurations, these
machines operate in lock-step:
\begin{lemma}
\label{lem:pointer-equiv}
$\evf_{\mathit{CESK^\star}}(e) \simeq
\evf_{\mathit{CESK}}(e)$.
\end{lemma}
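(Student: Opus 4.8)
The plan is to exhibit a step-for-step (lock-step) bisimulation between the \CESP{} and CESK machines, witnessed by a function that \emph{reads back} a store-allocated continuation into the inductive continuations of the CESK machine, and then to lift that correspondence from single steps to sets of reachable states; the latter is exactly what $\simeq$ records, using the same notion of equivalence---up to renaming of addresses---already employed for Lemma~\ref{lem:store-equiv}.

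First I would define an unloading function $\unload$ on \CESP{} configurations. Given a \CESP{} store $\store$, read back a continuation pointer by chasing its chain of frames: $\unload_\store(\mtk) = \mtk$, $\unload_\store(\ark(\expr,\env,\addr)) = \ark(\expr,\env,\unload_\store(\store(\addr)))$, and symmetrically for $\fnk$; the environments and bindings carried inside frames need no treatment, since $\s{Env}$ is already the non-recursive $\syn{Var}\parto\s{Addr}$ in both machines. Lift this to states by $\unload\langle\expr,\env,\store,\addr\rangle = \langle\expr,\env,\store',\unload_\store(\store(\addr))\rangle$, where $\store'$ is $\store$ with its continuation entries dropped. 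For $\unload$ to be total on reachable \CESP{} states I would first prove, by induction on the transition sequence, the invariant that every continuation pointer reachable in a \CESP{} store heads a \emph{finite, acyclic} chain of frames: the side condition $\addrnext\notin\dom(\store)$ ensures a freshly allocated frame points only to already-allocated addresses, so no cycle can form and continuation entries are never overwritten.

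Next, the base case $\unload(\inj_{\mathit{CESK^\star}}(\expr)) = \inj_{\mathit{CESK}}(\expr) = \langle\expr,\mte,\mts,\mtk\rangle$ is immediate. Then I would prove the simulation step: for each of the four \CESP{} rules, $\state\longmapsto_{\mathit{CESK}^\star}\state'$ implies $\unload(\state)\longmapsto_{\mathit{CESK}}\unload(\state')$. Variable lookup and application are routine; application store-allocates an $\ark$ frame where CESK conses one on, and unloading the new pointer reproduces precisely the CESK continuation. The two return rules are where the indirection does work: $\unload_\store(\store(\addr))$ on the left is exactly the $\ark$/$\fnk$ continuation the CESK machine destructures, and the pointer produced on the right unloads to its tail (respectively, to the freshly pushed $\fnk$ frame on top), matching the corresponding CESK rule. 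The converse direction is symmetric, and because both transition relations are deterministic up to the choice of fresh addresses, the two directions together give the lock-step property. Since a \CESP{} state is final (a value in control, pointer to $\mtk$) iff its unloading is a final CESK state, the sets of reachable states correspond under $\unload$, i.e.\ $\evf_{\mathit{CESK^\star}}(\expr)\simeq\evf_{\mathit{CESK}}(\expr)$.

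The main obstacle is bookkeeping rather than insight. The two stores are never literally equal---CESK allocates one value address per $\betavalue$ step, while \CESP{} additionally allocates a continuation address at each application and each return---so the bisimulation has to be stated modulo an address bijection, exactly the equivalence already invoked for Lemma~\ref{lem:store-equiv}, and the acyclicity invariant above is what makes the read-back total; everything else is a mechanical four-case induction.
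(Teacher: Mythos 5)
Your proposal is correct, and it fleshes out exactly what the paper asserts: the paper states Lemma~\ref{lem:pointer-equiv} without proof, merely observing that from corresponding initial configurations the CESK$^\star$ and CESK machines ``operate in lock-step,'' so your read-back bisimulation is the natural way to discharge that claim rather than a divergent route. The two technical devices you supply---the unloading function that chases the continuation chain through the store, and the invariant that reachable stores contain only finite, acyclic, never-overwritten continuation chains (guaranteed by the freshness side condition $\addrnext\notin\dom(\store)$)---are precisely what makes the asserted lock-step correspondence precise, and your case analysis of the four rules matches the machine definitions in Figures~\ref{fig:cesk} and~\ref{fig:cesa}. Two small remarks: since $\evf$ is defined as the set of \emph{reachable} states rather than final answers, your closing observation about final states is unnecessary (the correspondence of reachable-state sets under $\unload$, modulo the address bijection and the dropped continuation entries, is already the content of $\simeq$); and when transporting a fresh CESK$^\star$ binding address to the CESK side, it is worth noting explicitly that freshness for the larger CESK$^\star$ store implies freshness for the continuation-stripped store, which together with identification of states up to consistent renaming of addresses closes the bookkeeping you rightly flag as the only real obstacle.
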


\paragraph{Addresses, abstraction and allocation:}

The CESK$^\star$ machine, as defined in Figure~\ref{fig:cesa},
nondeterministically chooses addresses when it allocates a location in
the store, but because machines are identified up to consistent
renaming of addresses, the transition system remains deterministic.

Looking ahead, an easy way to bound the state-space of this machine is
to bound the set of addresses.\footnote{A finite number of addresses
  leads to a finite number of environments, which leads to a finite
  number of closures and continuations, which in turn, leads to a
  finite number of stores, and finally, a finite number of states.}
But once the store is finite, locations may need to be reused and when
multiple values are to reside in the same location; the store will
have to soundly approximate this by \emph{joining} the values.

In our concrete machine, all that matters about an allocation strategy
is that it picks an unused address.  In the abstracted machine
however, the strategy \emph{may have to re-use previously allocated
  addresses}.  The abstract allocation strategy is therefore crucial
to the design of the analysis---it indicates when finite resources
should be doled out and decides when information should deliberately
be lost in the service of computing within bounded resources.  In
essence, the allocation strategy is the heart of an analysis
(allocation strategies corresponding to well-known analyses are given in
Section~\ref{sec:labelled-acesk}.)

For this reason, concrete allocation deserves a bit more attention in
the machine.  An old idea in program analysis is that dynamically
allocated storage can be represented by the state of the computation
at allocation time~\cite[Section 1.2.2]{dvanhorn:Jones1982Flexible,dvanhorn:midtgaard-07}.  That is, allocation
strategies can be based on a (representation) of the machine history.
These representations are often called \emph{time-stamps}.

A common choice for a time-stamp, popularized by
Shivers~\cite{mattmight:Shivers:1991:CFA}, is to represent the history
of the computation as \emph{contours}, finite strings encoding the
calling context.
We present a concrete machine that uses general time-stamp approach
and is parameterized by a choice of $\tick$ and $\alloc$ functions.
We then instantiate $\tick$ and $\alloc$ to obtain an 
abstract machine for computing a $k$-CFA-style analysis using the
contour approach.


\subsection{The time-stamped CESK\texorpdfstring{$^\star$}{*} machine}
\label{sec:secpt}

The machine states of the time-stamped \CESP{} machine include a
\emph{time} component, which is intentionally left unspecified:
\[
\begin{array}{r@{\;}c@{\;}l}
  \tm,\tmnext &\in &\s{Time}
  \\
  \state &\in& \State =
  \syn{Exp} \times \s{Env} \times \s{Store} \times \s{Addr} \times \s{Time}  
  \text.
\end{array}
\]
The machine is parameterized by the functions:
\begin{align*}
\tick &: \State \rightarrow \s{Time}
&
\alloc &: \State \to \s{Addr}
\text.
\end{align*}
The $\tick$ function returns the next time; the $\alloc$ function
allocates a fresh address for a binding or continuation.
We require of $\tick$ and $\alloc$ that for all $\tm$ and $\state$,
$\tm \sqsubset \tick(\state)$ and $\alloc(\state) \notin \sigma$ where
$\state = \langle \_,\_,\store,\_,\_\rangle$.

The time-stamped \CESP{} machine is defined in Figure~\ref{fig:cesat}.
Note that occurrences of $\state$ on the right-hand side of this
definition are implicitly bound to the state occurring on the
left-hand side.
The initial machine state is defined as:
\[
\inj_{\mathit{CESK^\star_\tm}}(\expr) = \langle\expr,\mte,[\addr_0 \mapsto \mtk],\addr_0,\tm_0\rangle\text.
\]
\begin{figure}
\[
\begin{array}{@{}l@{\;}|@{\;}r@{}}

\multicolumn{2}{c}{\state \longmapsto_{\mathit{CESK}^\star_\tm} \state'
\text{, where }\cont=\store(\addr), \addrnext=\alloc(\state), \tmnext=\tick(\state)} 
\\[1mm]
\hline \\
\langle\vv, \env, \store, \addr,\tm\rangle &
\langle\den,\env', \store, \addr,\tmnext\rangle
\mbox{ where }(\den,\env') = \store(\env(\vv))
\\[1mm]
\langle\appform{\expr_0}{\expr_1}, \env, \store, \addr,\tm\rangle 
&
\langle\expr_0, \env, \store[\addrnext\mapsto \ark(\expr_1, \env, \addr)], \addrnext,\tmnext\rangle
\\[1mm]
\langle\den,\env,\store, \addr,\tm\rangle 
\\
\mbox{if }\cont = \ark(\expr,\env,\addrnextnext) 
&
\langle\expr,\env,\store[\addrnext\mapsto \fnk(\den,\env,\addrnextnext)],\addrnext,\tmnext\rangle
\\
\mbox{if }\cont = \fnk(\lamform{\vv}{\expr},\env',c) 
& 
\langle\expr,\env'[\vv \mapsto b], \store[b \mapsto (\den,\env)], c,\tmnext\rangle
\end{array}
\]
\caption{The time-stamped CESK$^\star$ machine.}
\label{fig:cesat}
\end{figure}

Satisfying definitions for the parameters are:
\begin{gather*}
\s{Time} = \s{Addr} = \mathbb{Z}
\\
\begin{align*}
\addr_0 = \tm_0 &= 0
&
\tick\langle \_, \_, \_, \_, \tm\rangle &= \tm+1
&
\alloc \langle \_, \_, \_, \_, \tm\rangle &= \tm
\text.
\end{align*}
\end{gather*}
Under these definitions, the time-stamped CESK$^\star$ machine
operates in lock-step with the CESK$^\star$ machine, and therefore
with the CESK and CEK machines as well.
\begin{lemma}
\label{lem:time-stamp-equiv}
$\evf_{\mathit{CESK^\star_\tm}}(e) \simeq
\evf_{\mathit{CESK^\star}}(e)$.
\end{lemma}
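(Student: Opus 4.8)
The plan is to establish a bisimulation between the time-stamped \CESP{} machine and the plain \CESP{} machine, exactly in the spirit of the lock-step arguments already invoked for Lemmas~\ref{lem:store-equiv} and~\ref{lem:pointer-equiv}. Define an erasure map $|\cdot|$ from time-stamped states to \CESP{} states that simply drops the $\s{Time}$ component: $|\langle \expr,\env,\store,\addr,\tm\rangle| = \langle \expr,\env,\store,\addr\rangle$, where the erasure is applied member-wise inside the store so that any continuations stored in $\sigma$ are themselves time-free (in fact continuations here carry addresses, not times, so the erasure is the identity on $\s{Kont}$). First I would observe that $|\inj_{\mathit{CESK^\star_\tm}}(\expr)| = \inj_{\mathit{CESK^\star}}(\expr)$, which is immediate from the two definitions of $\inj$.

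Next I would prove the two-way simulation. \emph{Forward:} if $\state \longmapsto_{\mathit{CESK}^\star_\tm} \state'$ then $|\state| \longmapsto_{\mathit{CESK}^\star} |\state'|$. This is a case analysis on the four transition rules of Figure~\ref{fig:cesat}; in each case, erasing the time component yields precisely the corresponding rule of Figure~\ref{fig:cesa}, because the time is never inspected when choosing the control string, environment, or continuation---it is only threaded through and updated by $\tick$. The one point needing care is allocation: the time-stamped machine uses $\addrnext = \alloc(\state)$ while the plain \CESP{} machine picks any $\addrnext \notin \dom(\store)$; but the side condition $\alloc(\state)\notin\sigma$ guarantees the chosen address is fresh, so it is a legal choice for the nondeterministic rule, and since \CESP{} states are identified up to consistent renaming of addresses the particular choice is immaterial. \emph{Backward:} if $|\state| \longmapsto_{\mathit{CESK}^\star} \state''$ then there is a (unique up to renaming) $\state'$ with $\state \longmapsto_{\mathit{CESK}^\star_\tm} \state'$ and $|\state'| = \state''$; again this is the same case analysis, using $\tick$ to supply the time component and $\alloc$ to supply the fresh address, with the side conditions on $\tick$ and $\alloc$ ensuring these are well-defined and consistent with the \CESP{} rule's freshness requirement.

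From the step-wise simulation I would lift to whole runs by a routine induction on the length of $\multistep$: $\alpha(\inj(\expr)) \multistep_{\mathit{CESK}^\star_\tm} \state$ iff $\inj(\expr) \multistep_{\mathit{CESK}^\star} |\state|$, and conversely every \CESP{} reachable state is the erasure of some time-stamped reachable state. Hence the two sets of reachable states are related by $|\cdot|$, which gives $\evf_{\mathit{CESK^\star_\tm}}(e) \simeq \evf_{\mathit{CESK^\star}}(e)$ under the same notion of equivalence $\simeq$ (agreement up to erasure of times and renaming of addresses) used in the preceding lemmas. I expect the only genuinely delicate point to be the bookkeeping around addresses: one must check that a single $\alloc$ choice on the time-stamped side can always be matched on the plain side and vice versa, which rests entirely on the imposed requirement $\alloc(\state)\notin\sigma$ together with the up-to-renaming identification of states; everything else is a mechanical rule-by-rule comparison of Figures~\ref{fig:cesa} and~\ref{fig:cesat}.
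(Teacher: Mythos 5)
Your proposal is correct and matches the paper's approach: the paper offers no detailed proof, simply asserting that under the stated requirements on $\tick$ and $\alloc$ the time-stamped machine operates in lock-step with the \CESP{} machine, and your erasure-based bisimulation (with the freshness side condition $\alloc(\state)\notin\dom(\store)$ and identification of states up to consistent renaming of addresses handling the allocation choice) is exactly the natural formalization of that lock-step claim.
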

\noindent
The time-stamped CESK$^\star$ machine forms the basis of our
abstracted machine in the following section.

\subsection{The abstract time-stamped CESK\texorpdfstring{$^\star$}{*} machine}
\label{sec:acespt}

As alluded to earlier, with the time-stamped CESK$^\star$ machine, we
now have a machine ready for direct abstract interpretation via a
single point of approximation: the store.
Our goal is a machine that resembles the
time-stamped CESK$^\star$ machine, but operates over a finite
state-space and it is allowed to be nondeterministic.
Once the state-space is finite, the transitive
closure of the transition relation becomes computable, and this
transitive closure constitutes a static analysis.
Buried in a path through the transitive closure is a (possibly
infinite) traversal that corresponds to the concrete execution of the
program.

The abstracted variant of the time-stamped CESK$^\star$ machine comes from
bounding the address space of the store and the number of times
available.
By bounding these sets, the state-space becomes
finite,\footnote{Syntactic sets like $\syn{Exp}$ are infinite, but
  finite for any given program.}  but for the purposes of soundness,
an entry in the store may be forced to hold several values
simultaneously:
\[
\begin{grammar}
  \astore &\in &\sa{Store} &=& \s{Addr} \parto \Pow{\Storable}
  \text.
\end{grammar}
\]
Hence, stores now map an address to a \emph{set} of
storable values rather than a single value.  These
collections of values model approximation in the analysis.  If a
location in the store is re-used, the new value is joined with the
current set of values.  When a location is dereferenced, the analysis
must consider any of the values in the set as a result of the
dereference.

The abstract time-stamped CESK$^\star$ machine is defined in Figure~\ref{fig:acesat}.
The (non-deterministic) abstract transition relation changes little compared with
the concrete machine.
We only have to modify it to account for the possibility that multiple
storable values (which includes continuations) may reside together in
the store, which we handle by letting the machine
non-deterministically choose a particular value from the set at a
given store location.

\begin{figure}
\[
\begin{array}{@{}l@{\;}|@{\;}r@{}}

\multicolumn{2}{@{}c@{}}{\astate \longmapsto_{\widehat{\mathit{CESK}^\star_\tm}} \astate'
\text{, where } \cont\in\astore(\addr), \addrnext=\aalloc(\astate,\cont), \tmnext=\atick(\astate,\cont)}
\\[2mm]
\hline & \\
\langle\vv, \env, \astore, \addr,\tm\rangle &
\langle \den,\env', \astore, \addr,\tmnext\rangle
\text{ where } (\den,\env') \in \astore(\env(\vv))
\\[1mm]
\langle\appform{\expr_0}{\expr_1}, \env, \astore, \addr,\tm\rangle 
&
\langle\expr_0, \env, \astore \join [\addrnext\mapsto \ark(\expr_1, \env, \addr)], \addrnext,\tmnext\rangle
\\[1mm]
\langle \den,\env,\astore, \addr,\tm\rangle 
\\
\mbox{if }\cont= \ark(\expr,\env',\addrnextnext)
&
\langle\expr,\env',\astore \join [\addrnext\mapsto \fnk(\den,\env,\addrnextnext)],\addrnext,\tmnext\rangle
\\
\mbox{if }\cont=\fnk(\lamform{\vv}{\expr},\env',\addrnextnext)
&
\langle\expr,\env'[\vv \mapsto \addrnext], \astore \join [\addrnext \mapsto (\den,\env)], \addrnextnext,\tmnext\rangle
\end{array}
\]
\caption{The abstract time-stamped CESK$^\star$ machine.}
\label{fig:acesat}
\end{figure}

The analysis is parameterized by abstract variants of the functions
that parameterized the concrete version:
\begin{align*}
\atick &: \aState \times \s{Kont} \rightarrow \s{Time}\text,
&
\aalloc &: \aState \times \s{Kont} \to \s{Addr}
\text.
\end{align*}
In the concrete, these parameters determine allocation and
stack behavior.
In the abstract, they are the arbiters of precision: they determine
when an address gets re-allocated, how many addresses get allocated,
and which values have to share addresses.

Recall that in the concrete semantics, these functions consume
states---not states and continuations as they do here.
This is because in the concrete, a state alone suffices since the
state determines the continuation.
But in the abstract, a continuation pointer within a state may denote a
multitude of continuations; however the transition relation is
defined with respect to the choice of a particular one.
We thus pair states with continuations to encode the choice.

The \emph{abstract} semantics computes the set of reachable states:
\[
\avf_{\widehat{\mathit{CESK}^\star_\tm}}(\expr) = \{ \astate\ |\ \langle\expr,\mte,[\addr_0 \mapsto \mtk],\addr_0,\tm_0\rangle \multistep_{\widehat{\mathit{CESK}^\star_\tm}} \astate \}\text.
\]

\subsection{Soundness and computability}
\label{sec:cesk-soundness}

The finiteness of the abstract state-space ensures decidability.

\begin{theorem}[Decidability of the Abstract CESK$^\star$ Machine]\ \\
$\astate \in \avf_{\widehat{\mathit{CESK}^\star_\tm}}(\expr)$ is decidable.
\end{theorem}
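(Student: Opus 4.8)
I would prove this by establishing that the abstract state-space reachable from any injected expression is finite, and that the transition relation over it is computable; decidability of membership then follows by breadth-first exploration. First I would fix a closed expression $\expr$ and observe that although the syntactic sets $\syn{Exp}$, $\syn{Var}$, and hence $\s{Lab}$ are infinite in general, only the finitely many subexpressions of $\expr$ can ever appear in a reachable state (the transition relation only ever moves to subexpressions of the control string, or to expressions stored in continuations and closures, all of which trace back to $\expr$). So effectively $\syn{Exp}$ and $\syn{Var}$ are finite for the purposes of this argument.

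The crux is the footnote's cascade: bounding $\s{Addr}$ and $\s{Time}$ makes everything else finite. Concretely, I would argue in this order. (1) $\s{Addr}$ and $\s{Time}$ are finite by the hypotheses of this subsection (the abstracted machine bounds the address space and the set of times). (2) $\s{Env} = \syn{Var} \parto \s{Addr}$ is then a finite set of finite maps, hence finite. (3) $\s{Val}$ is finite, since a value $\lamform{\vv}{\expr}$ ranges over finitely many lambda subexpressions of $\expr$. (4) $\s{Kont}$ is finite: a continuation is $\mtk$, or $\ark(\expr,\env,\addr)$, or $\fnk(\den,\env,\addr)$, each built from components drawn from now-finite sets, with no recursion (the recursive position is an $\s{Addr}$, not a $\s{Kont}$---this is exactly the payoff of the pointer refinement). (5) $\Storable = \Den \times \Env + \s{Kont}$ is finite, so $\Pow{\Storable}$ is finite, so $\sa{Store} = \s{Addr} \parto \Pow{\Storable}$ is finite. (6) Therefore $\aState = \syn{Exp} \times \s{Env} \times \sa{Store} \times \s{Addr} \times \s{Time}$ is finite. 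I would also note that $\multistep_{\widehat{\mathit{CESK}^\star_\tm}}$ is decidable between given states: each rule in Figure~\ref{fig:acesat} is a finite disjunction over the finitely many $\cont \in \astore(\addr)$, and the operations involved ($\aalloc$, $\atick$, store join, environment lookup) are computable, provided---and this is a standing assumption---the parameters $\aalloc$ and $\atick$ are themselves computable functions.

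Given finiteness of $\aState$ and decidability of single steps, the set $\avf_{\widehat{\mathit{CESK}^\star_\tm}}(\expr)$ is computable: start from the injected state $\langle\expr,\mte,[\addr_0 \mapsto \mtk],\addr_0,\tm_0\rangle$, repeatedly take all $\multistep$-successors, and stop when no new states appear---termination is guaranteed because the worklist is drawn from a finite set. Membership of a given $\astate$ is then decided by testing inclusion in this finite set. The main obstacle, such as it is, is not deep but bookkeeping: carefully justifying step (1)---that the chosen $\aalloc$ really does take values in a \emph{finite} $\s{Addr}$ and $\atick$ in a finite $\s{Time}$---since the framework as presented leaves these as parameters, and the theorem is implicitly quantified over admissible (finite-range, computable) instantiations. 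One also has to be slightly careful that the ``finitely many subexpressions of $\expr$'' observation is genuinely preserved by every rule, but this is immediate from inspection of Figure~\ref{fig:acesat}.
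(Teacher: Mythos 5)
Your proposal is correct and follows essentially the same route as the paper: the paper's proof simply observes that the state-space is non-recursive with finite sets at the leaves (given bounded addresses and times, and per-program finiteness of syntactic sets), so the abstract state-space is finite and reachability is decidable. Your more detailed cascade of finiteness and the explicit computability caveats on $\aalloc$ and $\atick$ merely spell out what the paper's one-line argument (and its footnotes) take as given.
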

\begin{proof}
  The state-space of the machine is non-recursive with finite sets at the leaves
  on the assumption that addresses are finite.  Hence 
  reachability is decidable since the abstract state-space is finite.
\end{proof}

We have endeavored to evolve the abstract machine gradually so that
its fidelity in soundly simulating the original CEK machine is both
intuitive and obvious.
But to formally establish soundness of the abstract time-stamped
\CESP{} machine, we use an abstraction function, defined in
Figure~\ref{fig:abs-map}, from the state-space of the concrete
time-stamped machine into the abstracted state-space.

\begin{figure}
\begin{align*}
  \absmap(\expr,\env,\store,\addr,\tm) &= (e,\absmap(\env),\absmap(\store),\absmap(\addr),\absmap(\tm)) 
    && \text{[states]}
  \\
  \absmap(\env) &= \lambda \vv . \absmap(\env(\vv)) && \text{[environments]}
\\
  \absmap(\store) &= \lambda \aaddr . \!\!\!\! \bigjoin_{\absmap(\addr) = \aaddr} \!\!\!\! \set{\absmap(\store(\addr))} && \text{[stores]}  
\\
  \absmap(\lamform{\vv}{\expr},\env) &= (\lamform{\vv}{\expr},\absmap(\env)) && \text{[closures]}
  \\
  \absmap(\mtk) &= \mtk && \text{[continuations]}
  \\
  \absmap(\ark(\expr,\env,\addr)) &= \ark(\expr,\absmap(\env),\absmap(\addr))
  \\
  \absmap(\fnk(\den,\env,\addr)) &= \fnk(\den,\absmap(\env),\absmap(\addr))
  \text,
\end{align*}
\caption{The abstraction map, $\absmap :
  \State_{\mathit{CESK}^\star_\tm} \rightarrow
  \aState_{\widehat{\mathit{CESK}^\star_\tm}}$.}
\label{fig:abs-map}
\end{figure}

The abstraction map over times and addresses is defined so that
the parameters $\aalloc$ and $\atick$ are
sound simulations of the parameters $\alloc$
and $\tick$, respectively.
We also define the partial order $(\wt)$ on the abstract state-space
as the natural point-wise, element-wise, component-wise and
member-wise lifting, wherein the partial orders on the sets
$\syn{Exp}$ and $\s{Addr}$ are flat.
Then, we can prove that abstract machine's transition relation
simulates the concrete machine's transition relation.
\begin{theorem}[Soundness of the Abstract CESK$^\star$ Machine]\ \\   
  If $\state \longmapsto_{\mathit{CEK}} \state'$ and
  $\alpha(\state) \wt \astate$, then there exists an abstract state
  $\astate'$, such that $\astate
  \longmapsto_{\widehat{\mathit{CESK}}^\star_\tm} \astate'$ and
  $\alpha(\state') \wt \astate'$.
\end{theorem}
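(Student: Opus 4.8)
The plan is to reduce to the time-stamped \CESP{} machine and then argue by a case analysis on its four transition rules. First observe that the abstraction map $\alpha$ of Figure~\ref{fig:abs-map} is defined on states of the time-stamped \CESP{} machine, so the statement is really about that machine; by the lock-step relationship underlying Lemmas~\ref{lem:store-equiv}, \ref{lem:pointer-equiv}, and~\ref{lem:time-stamp-equiv}, a step $\state \longmapsto_{\mathit{CEK}} \state'$ is mirrored by a step $\state \longmapsto_{\mathit{CESK}^\star_\tm} \state'$ on the corresponding time-stamped states (and conversely), so it suffices to prove: if $\state \longmapsto_{\mathit{CESK}^\star_\tm} \state'$ and $\alpha(\state) \wt \astate$, then there is an $\astate'$ with $\astate \longmapsto_{\widehat{\mathit{CESK}^\star_\tm}} \astate'$ and $\alpha(\state') \wt \astate'$.

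Before the case analysis I would record a few structural facts. Writing $\state = \langle e, \env, \store, \addr, \tm\rangle$ and $\astate = \langle \hat e, \aenv, \astore, \aaddr, \atm\rangle$, the hypothesis $\alpha(\state) \wt \astate$ unfolds component-wise: $\hat e = e$ (the order on $\syn{Exp}$ is flat), $\aaddr = \alpha(\addr)$ (the order on $\s{Addr}$ is flat), $\aenv$ agrees with $\alpha \compose \env$ on $\dom(\env)$, $\alpha(\store)(\hat a') \subseteq \astore(\hat a')$ for every abstract address $\hat a'$ (member-wise lifting over $\Pow{\Storable}$), and --- this is exactly the reason $\alpha$ is defined as it is on times and addresses --- $\atick$ and $\aalloc$ are sound simulations, i.e.\ $\alpha(\tick(\state)) \wt \atick(\astate,\alpha(\store(\addr)))$ and $\alpha(\alloc(\state)) = \aalloc(\astate,\alpha(\store(\addr)))$. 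I would also prove the key \emph{store lemma}: for any $\store$, $\addr$, $\sto$, $\alpha(\store[\addr \mapsto \sto]) \wt \alpha(\store) \join [\alpha(\addr) \mapsto \set{\alpha(\sto)}]$; this is immediate from $\alpha(\store) = \lambda \hat a . \bigjoin_{\alpha(\addr')=\hat a}\set{\alpha(\store(\addr'))}$, since updating $\store$ at $\addr$ can only enlarge the join at the single abstract slot $\alpha(\addr)$, and it combines with monotonicity of $\join$.

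With these in hand the case analysis is mechanical. In each rule let $\cont = \store(\addr)$ be the concrete continuation at which it fires; since $\aaddr = \alpha(\addr)$ we have $\alpha(\cont) \in \alpha(\store)(\aaddr) \subseteq \astore(\aaddr)$, so the abstract machine may choose $\acont = \alpha(\cont)$, and the matching abstract rule fires: for the variable and application rules nothing constrains the shape of $\cont$, and for the two return rules $\alpha$ commutes with the constructors $\ark$ and $\fnk$, so $\acont$ has the required shape (with environment component $\alpha(\env')$ and saved pointer $\alpha(\addrnextnext)$). Running the abstract rule yields $\astate'$, and $\alpha(\state') \wt \astate'$ is a per-component comparison: controls and denotable values transfer verbatim; environments transfer because $\alpha$ commutes with environment extension and $\aenv = \alpha\compose\env$ on the relevant domain; the freshly allocated address and the next time transfer by soundness of $\aalloc$ and $\atick$ at $\acont = \alpha(\cont)$; and the updated store transfers by the store lemma, the hypothesis $\alpha(\store) \wt \astore$, monotonicity of $\join$, and monotonicity of the continuation and closure constructors in their environment argument. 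In the variable case one additionally uses that $\store(\env(\vv)) = (\den,\env')$ together with $\alpha(\store)\wt\astore$ gives $(\den,\alpha(\env')) \in \astore(\aenv(\vv))$, so the abstract rule can select exactly that pair.

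The main obstacle is getting the store bookkeeping exactly right on two fronts. First, the store lemma must survive address re-use: distinct concrete addresses can abstract to one abstract address, so allocating even a \emph{fresh} $\addr$ in the concrete can collide in the abstract, and it is precisely the $\join$ in Figure~\ref{fig:acesat} (rather than a functional update) that absorbs the collision --- this is the one place the abstract machine genuinely loses information and the one place $\wt$, rather than $=$, is essential. Second, the non-determinism of the abstract machine must be deployed in our favour: $\astore$ sends a pointer to a \emph{set} of continuations, and the simulation only goes through because we are free to select the particular continuation $\alpha(\store(\addr))$ that the concrete machine used, after which the abstract transition relation --- which is defined relative to that choice --- lines up with the concrete rule. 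The remaining work, pinning down the flat orders on $\syn{Exp}$ and $\s{Addr}$ and the routine component checks, is bookkeeping.
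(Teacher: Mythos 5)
Your proposal is correct and follows essentially the same route as the paper's proof: reduce to the time-stamped \CESP{} machine via Lemmas~\ref{lem:store-equiv}--\ref{lem:time-stamp-equiv}, then case-split on the four transition rules, with the nondeterministic cases going through because the abstract store (and hence the chosen continuation/closure) soundly approximates the concrete one. You simply make explicit the bookkeeping the paper leaves implicit --- the component-wise unfolding of $\alpha(\state)\wt\astate$, the store-update/join lemma, and the sound-simulation requirement on $\aalloc$ and $\atick$ --- which is a faithful elaboration rather than a different argument.
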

\begin{proof}
  By Lemmas~\ref{lem:store-equiv},~\ref{lem:pointer-equiv},
  and~\ref{lem:time-stamp-equiv}, it suffices to prove soundness with
  respect to $\longmapsto_{\mathit{CESK}^\star_\tm}$.
  Assume $\state \longmapsto_{\mathit{CESK}^\star_\tm} \state'$ and
  $\alpha(\state) \wt \astate$.
  Because $\state$ transitioned, exactly one of the rules from the definition of
  $(\longmapsto_{\mathit{CESK}^\star_\tm})$ applies.
  We split by cases on these rules.
  The rule for the second case is deterministic and follows by calculation.
  For the the remaining (nondeterministic) cases, we must show
  an abstract state exists such that the simulation is
  preserved.
  By examining the rules for these cases, we see that all three hinge
  on the abstract store in $\astate$ soundly approximating the
  concrete store in $\state$, which follows from the assumption that
  $\absmap(\state) \wt \astate$.
\end{proof}

\subsection{A \texorpdfstring{\(\boldsymbol k\)}{k}-CFA-like abstract CESK\texorpdfstring{$^\star$}{*} machine}
\label{sec:labelled-acesk}

In this section, we instantiate the time-stamped CESK$^\star$ machine
to obtain a contour-based machine; this instantiation forms the basis
of a context-sensitive abstract interpreter with polyvariance like
that found in $k$-CFA~\cite{mattmight:Shivers:1991:CFA}.
In preparation for abstraction, we instantiate the time-stamped machine
using labeled call strings.

Inside times, we use \emph{contours} ($\s{Contour}$), which are finite strings of
call site labels that describe the current context:
\[
\begin{grammar}
  \delta &\in & \s{Contour} &\produces&
  \epsilon \opor \lab\delta
  \text.
\end{grammar}
\]

The labeled CESK machine transition relation must appropriately
instantiate the parameters $\tick$ and $\alloc$ to augment the
time-stamp on function call.

Next, we switch to abstract stores and bound the address space by
truncating call string contours to length at most $k$ (for $k$-CFA):
\begin{equation*}
  \delta \in \sa{Contour}_k \text{ iff }\delta \in \s{Contour}\text{ and }\abs{\delta} \leq k\text.
\end{equation*}
Combining these changes, we arrive at the instantiations
for the concrete and abstract machines given in Figure~\ref{fig:kcfa},
where the value $\lfloor\delta\rfloor_k$ is the leftmost $k$ labels of contour $\delta$.

\begin{figure}
\begin{gather*}
\begin{align*}
\s{Time} &= (\syn{Lab} + \bullet) \times \s{Contour}
\\
\s{Addr} &= (\syn{Lab} + \syn{Var}) \times \s{Contour}
\\[1mm]
\tm_0 &= (\bullet,\epsilon)
\\
\tick\langle\vv,\_,\_,\_,\tm\rangle &= \tm
\\
\tick\langle\appform{\expr_0}{\expr_1}^\lab,\_,\_,\_,(\_,\delta)\rangle
&= (\lab,\delta)
\\
\tick\langle\den,\_,\store,\addr,(\lab,\delta)\rangle &= 
\begin{cases}
(\lab,\delta), 
& \mbox{if }\store(\addr) = \ark(\_,\_,\_)
\\
(\bullet,\lab\delta), 
& \mbox{if }\store(\addr) = \fnk(\_,\_,\_)
\end{cases}
\end{align*}
\\[1mm]
\begin{align*}
\alloc(\langle\appform{\expr_0^\lab}{\expr_1},\_,\_,\_,(\_,\delta)\rangle)
&= (\lab,\delta)
\\
\alloc(\langle\den,\_,\store,\addr,(\_,\delta)\rangle)
&= (\lab,\delta) \mbox{ if }  \store(\addr) = \ark(\expr^\lab,\_,\_)
\\[1mm]
\alloc(\langle\den,\_,\store,\addr,(\_,\delta)\rangle)
&= (\vv,\delta) \mbox{ if } \store(\addr) = \fnk(\lamform{\vv}{\expr},\_,\_)
\end{align*}
\end{gather*}
\begin{gather*}
\begin{align*}
\atick(\langle\vv,\_,\_,\_,\tm\rangle,\cont) &= \tm
\\
\atick(\langle\appform{\expr_0}{\expr_1}^\lab,\_,\_,\_,(\_,\delta)\rangle,\cont)
&= (\lab,\delta)
\\
\atick(\langle\den,\_,\astore,\addr,(\lab,\delta)\rangle,\cont) &= 
\begin{cases}
(\lab,\delta), 
& \mbox{if }\cont= \ark(\_,\_,\_)
\\
(\bullet,\lfloor\lab\delta\rfloor_k), 
& \mbox{if }\cont= \fnk(\_,\_,\_)
\end{cases}
\end{align*}
\\[1mm]
\begin{align*}
\aalloc(\langle\appform{\expr_0^\lab}{\expr_1},\_,\_,\_,(\_,\delta)\rangle, \cont)
&= (\lab,\delta)
\\
\aalloc(\langle\den,\_,\astore,\addr,(\_,\delta)\rangle, \cont)
&= (\lab,\delta) \mbox{ if } \cont = \ark(\expr^\lab,\_,\_)
\\[1mm]
\aalloc(\langle\den,\_,\astore,\addr,(\_,\delta)\rangle, \cont)
&= (\vv,\delta) \mbox{ if } \cont = \fnk(\lamform{\vv}{\expr},\_,\_)
\end{align*}
\end{gather*}
\caption{Instantiation for $k$-CFA machine.}
\label{fig:kcfa}
\end{figure}


\paragraph{Comparison to \texorpdfstring{$\boldsymbol k$}{k}-CFA:}

We say ``$k$-CFA-like'' rather than ``$k$-CFA'' because there are
distinctions between the machine just described and $k$-CFA:
\begin{enumerate}

\item $k$-CFA focuses on ``what flows where''; the ordering between
  states in the abstract transition graph produced by our machine
  produces ``what flows where \emph{and when}.''

\item Standard presentations of $k$-CFA implicitly inline a global
  approximation of the store into the
  algorithm~\cite{mattmight:Shivers:1991:CFA}; ours uses one store per
  state to increase precision at the cost of complexity.  
  In terms of our framework, the lattice through which classical
  $k$-CFA ascends is $\Pow{\s{Exp} \times \s{Env} \times \s{Addr}}
  \times \sa{Store}$, whereas our analysis ascends the lattice
  $\Pow{\s{Exp} \times \s{Env} \times \sa{Store} \times \s{Addr}}$.
  We can explicitly inline the store to
  achieve the same complexity, as shown in Section~\ref{sec:widening}.

\item On function call, $k$-CFA merges argument values together with
  previous instances of those arguments from the same context; our
  ``minimalist'' evolution of the abstract machine takes a
  higher-precision approach: it forks the machine for each argument
  value, rather than merging them immediately.

\item $k$-CFA does not recover explicit information about stack
  structure; our machine contains an explicit model of the stack for
  every machine state.

\end{enumerate}

\section{Analyzing by-need with Krivine's machine}
\label{sec:krivine}
Even though the abstract machines of the prior section have advantages
over traditional CFAs, the approach we took (store-allocated
continuations) yields more novel results when applied in a different
context: a lazy variant of Krivine's machine.
That is, we can construct an abstract interpreter that both analyzes
and exploits laziness.
Specifically, we present an abstract analog to a lazy and properly
tail-recursive variant of Krivine's
machine~\cite{dvanhorn:Krivine1985Un,dvanhorn:Krivine2007Callbyname}
derived by Ager, Danvy, and
Midtgaard~\cite{dvanhorn:Ager2004Functional}.  
The derivation from Ager \etal's machine to the abstract interpreter
follows the same outline as that of Section~\ref{sec:cek-to-acesk}: we
apply a pointer refinement by store-allocating continuations and
carry out approximation by bounding the store.

The by-need variant of Krivine's machine considered here uses the
common implementation technique of store-allocating thunks and forced
values.  When an application is evaluated, a thunk is created that
will compute the value of the argument when forced.  When a variable
occurrence is evaluated, if it is bound to a thunk, the thunk is
forced (evaluated) and the store is updated to the result.  Otherwise
if a variable occurrence is evaluated and bound to a forced value,
that value is returned.

Storable values include delayed computations (thunks)
$\delayed(\expr,\env)$, and computed values $\computed(\den,\env)$,
which are just tagged closures.
There are two continuation constructors: $\kok(\addr,\cont)$ is
induced by a variable occurrence whose binding has not yet
been forced to a value.  The address $\addr$ is where we want to write
the given value when this continuation is invoked. The other:
$\ktk(\addr,\cont)$ is induced by an application expression, which
forces the operator expression to a value.  The address $\addr$ is the
address of the argument.

The concrete state-space is defined as follows and the transition
relation is defined in Figure~\ref{fig:lk}:
\[
\begin{grammar}
  \state &\in& \State &=& 
  \syn{Exp} \times \s{Env} \times \s{Store} \times \s{Kont} 
  \\
  \sto &\in & \Storable &\produces & \delayed(\expr,\env) \opor \computed(\den,\env) 
  \\
  \cont &\in &\s{Kont}&\produces &\mtk \opor \kok(\addr,\cont) \opor \ktk(\addr,\cont)
\end{grammar}
\]

\begin{figure}
\[
\begin{array}{@{}l@{\qquad}|@{\quad}r@{}}

\multicolumn{2}{c}{\state \longmapsto_{\mathit{LK}} \state'} \\[1mm]
\hline & \\

\langle\vv, \env, \store, \cont\rangle &
\\
\mbox{if }\store(\rho(\vv)) = \delayed(\expr,\env')
&
\langle \expr, \env', \store, \kok(\rho(\vv),\cont)\rangle
\\
\mbox{if }\store(\rho(\vv)) = \computed(\den,\env')
&
\langle\den, \env', \store, \cont\rangle
\\[1mm]
\langle\appform{\expr_0}{\expr_1}, \env, \store, \cont\rangle 
&
\langle\expr_0, \env, \store[\addr \mapsto \delayed(\expr_1,\env)], \ktk(\addr,\cont) \rangle
\\
&
\mbox{where }\addr \notin \dom(\store)
\\[1mm]
\langle\den,\env,\store,\kok(\addr,\cont)\rangle
&
\langle\den,\env,\store[\addr\mapsto \computed(\den,\env)],\cont\rangle
\\[1mm]
\langle\lamform{\vv}{\expr},\env,\store,\ktk(\addr,\cont)\rangle
&
\langle\expr,\env[\vv\mapsto\addr],\store,\cont\rangle
\\
\end{array}
\]
\caption{The LK machine.}
\label{fig:lk}
\end{figure}

When the control component is a variable, the machine looks up its
stored value, which is either computed or delayed.  If delayed, a
$\kok$ continuation is pushed and the frozen expression is put in
control.  If computed, the value is simply returned.  When a value is
returned to a $\kok$ continuation, the store is updated to reflect the
computed value.  When a value is returned to a $\ktk$ continuation,
its body is put in control and the formal parameter is bound to the
address of the argument.

We now refactor the machine to use store-allocated continuations;
storable values are extended to include continuations:
\[
\begin{grammar}
  \state &\in& \State &=& 
  \syn{Exp} \times \s{Env} \times \s{Store} \times \s{Addr} 
  \\
  \sto &\in & \Storable &\produces & \delayed(\expr,\env) \opor \computed(\den,\env) \opor \cont 
  \\
  \cont &\in &\s{Kont}&\produces &\mtk \opor \kok(\addr,\addr) \opor \ktk(\addr,\addr)
  \text.
\end{grammar}
\]
It is straightforward to perform a pointer-refinement of the LK
machine to store-allocate continuations as done for the CESK machine
in Section~\ref{sec:ceskp} and
observe the lazy variant of Krivine's machine
and its pointer-refined counterpart (not shown) operate in lock-step:
\begin{lemma}
$\evf_{\mathit{LK}}(\expr) \simeq \evf_{\mathit{LK}^\star}(\expr)$.
\end{lemma}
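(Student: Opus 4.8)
The plan is to establish a lock-step correspondence between the LK machine of Figure~\ref{fig:lk} and its pointer-refined counterpart $\mathrm{LK}^\star$, exactly mirroring the arguments for Lemmas~\ref{lem:pointer-equiv} and~\ref{lem:time-stamp-equiv}. First I would make the pointer-refined machine explicit (the excerpt only gives its state space and says the transition relation is ``not shown''): its states are $\langle \expr,\env,\store,\addr\rangle$ where $\store(\addr)$ holds the current continuation, and each LK rule is rewritten so that pushing a frame $\kok(\addr',\cont)$ or $\ktk(\addr',\cont)$ becomes: allocate a fresh address $\addrnext \notin \dom(\store)$, store $\kok(\addr',\addr)$ (resp.\ $\ktk(\addr',\addr)$) at $\addrnext$, and move $\addrnext$ into the continuation-pointer slot. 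Popping a frame becomes dereferencing $\store(\addr)$ to recover the stored continuation and its saved pointer. The injection is $\inj_{\mathit{LK}^\star}(\expr) = \langle \expr,\mte,[\addr_0 \mapsto \mtk],\addr_0\rangle$.

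The core of the proof is a simulation relation $\sim$ between LK states and $\mathrm{LK}^\star$ states. Define $\langle \expr,\env,\store,\cont\rangle \sim \langle \expr,\env,\store',\addr\rangle$ to hold when the two expression and environment components agree, when $\store'$ restricted to the ``value'' addresses (those in the range of environments) agrees with $\store$ up to the obvious unfolding of pointer-linked continuations, and when unwinding the chain of continuations reachable from $\addr$ in $\store'$ reproduces exactly $\cont$. I would then prove the two directions of a bisimulation: (i) if $\state_1 \longmapsto_{\mathit{LK}} \state_2$ and $\state_1 \sim \state_1'$, then $\state_1' \longmapsto_{\mathit{LK}^\star} \state_2'$ for some $\state_2' \sim \state_2$; and (ii) the converse. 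Both directions go by case analysis on the five LK rules (variable bound to a thunk, variable bound to a computed value, application, return to $\kok$, return to $\ktk$), and in each case the matching $\mathrm{LK}^\star$ rule is forced, with the new continuation address chosen fresh in direction (i) and read off $\store'$ in direction (ii). Since both machines are deterministic up to renaming of addresses, the bisimulation immediately yields that $\inj_{\mathit{LK}}(\expr) \multistep_{\mathit{LK}} \state$ iff $\inj_{\mathit{LK}^\star}(\expr) \multistep_{\mathit{LK}^\star} \state'$ with $\state \sim \state'$, and hence $\evf_{\mathit{LK}}(\expr) \simeq \evf_{\mathit{LK}^\star}(\expr)$ in whatever sense $(\simeq)$ was used for the earlier lemmas (bijective correspondence of reachable state sets modulo the pointer encoding).

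The one subtlety — and the place I expect the only real work — is the bookkeeping for the store. Unlike the CESK case, the LK machine already threads a store that the machine \emph{mutates} (the $\kok$ rule overwrites $\store(\addr)$ with $\computed(\den,\env)$ when a thunk is forced), so the invariant relating the LK store to the $\mathrm{LK}^\star$ store must be stated carefully: the $\mathrm{LK}^\star$ store is a disjoint extension that additionally holds continuation records at a set of addresses never aliased with value addresses, and the mutation of a thunk cell must be shown to commute with the pointer encoding. Making this invariant precise (in particular, that freshly allocated continuation addresses never collide with binding addresses, so that the two allocation streams stay separate) and checking it is preserved by the $\kok$-update rule is the heart of the argument; everything else is the same routine rule-by-rule matching used for Lemma~\ref{lem:pointer-equiv}. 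Accordingly I would state the store invariant as part of $\sim$, verify it in the application and force cases, and appeal to the identify-states-up-to-renaming convention to discharge freshness obligations.
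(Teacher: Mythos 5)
Your plan is correct and is essentially the argument the paper intends: the paper offers no written proof of this lemma at all, merely asserting that the pointer refinement is ``straightforward'' and that the two machines operate in lock-step, which is exactly the rule-by-rule bisimulation (with freshness discharged by the identify-up-to-renaming convention) that you spell out. Your extra care about the already-present, mutated store---keeping thunk/value addresses disjoint from continuation addresses and checking the $\kok$-update preserves the invariant---is a reasonable elaboration of a detail the paper leaves implicit, not a deviation from its approach.
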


\begin{figure}
\[
\begin{array}{@{}l|r@{}}

\multicolumn{2}{c}{\astate \longmapsto_{\widehat{\mathit{LK}^\star_\tm}} \astate'
\mbox{, where }\cont \in \astore(\addr), \addrnext=\aalloc(\astate,\cont), \tmnext=\atick(\astate,\cont)}
 \\[2mm]
\hline & \\

\langle\vv, \env, \astore, \addr, \tm\rangle 
&
\langle \expr, \env', \astore \sqcup [\addrnext\mapsto \kok(\rho(\vv),\addr)], \addrnext,\tmnext\rangle
\\
\mbox{if }\astore(\rho(\vv)) \ni \delayed(\expr,\env')
\\[1mm]
\langle\vv, \env, \astore, \addr,\tm\rangle &
\langle\den, \env', \astore, \addr,\tmnext\rangle
\\
\mbox{if }\astore(\rho(\vv)) \ni \computed(\den,\env')
\\[1mm]
\langle\appform{\expr_0}{\expr_1}, \env, \astore, \addr,\tm\rangle
&
\langle\expr_0, \env, \astore', \addrnext,\tmnext\rangle
\\
&
\mbox{where }\addrnextnext = \aalloc(\astate,\cont),
\\
&
\astore' = \astore\sqcup[\addrnextnext \mapsto \delayed(\expr_1,\env), \addrnext\mapsto \ktk(\addrnextnext,\addr)]
\\[1mm]
\langle\den,\env,\astore,\addr,\tm\rangle
&
\langle\den,\env',\astore\join[\addr'\mapsto \computed(\den,\env)],\addrnextnext,\tmnext\rangle
\\
\mbox{if }\cont= \kok(\addr',\addrnextnext)
\\[1mm]
\langle\lamform{\vv}{\expr},\env,\astore,\addr,\tm\rangle
&
\langle\expr,\env'[\vv\mapsto\addr'],\astore,\addrnextnext,\tmnext\rangle
\\
\mbox{if }\cont= \ktk(\addr',\addrnextnext)
\end{array}
\]
\caption{The abstract LK$^\star$ machine.}
\label{fig:lka}
\end{figure}

After threading time-stamps through the machine as done in
Section~\ref{sec:secpt} and defining $\atick$ and $\aalloc$
analogously to the definitions given in Section~\ref{sec:acespt}, the
pointer-refined machine abstracts directly to yield the abstract
LK$^\star$ machine in Figure~\ref{fig:lka}.

The abstraction map 
for this machine is a straightforward
structural abstraction similar to that given in
Section~\ref{sec:cesk-soundness} (and hence omitted).
The abstracted machine is sound with respect to the LK$^\star$
machine, and therefore the original LK machine.

\begin{theorem}[Soundness of the Abstract LK$^\star$ Machine]\ \\
  If $\state \longmapsto_{\mathit{LK}} \state'$ and
  $\alpha(\state) \wt \astate$, then there exists an abstract state
  $\astate'$, such that $\astate
  \longmapsto_{\widehat{\mathit{LK}^\star_\tm}} \astate'$ and
  $\alpha(\state') \wt \astate'$.
\end{theorem}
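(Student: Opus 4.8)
The plan is to follow the template of the soundness argument for the abstract CESK$^\star$ machine in Section~\ref{sec:cesk-soundness}: reduce the claim to a simulation statement about the time-stamped, pointer-refined machine, case-split on the concrete transition rule that fired, and in each case exhibit a matching abstract transition whose target still dominates the abstraction of the concrete successor state.

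First I would reduce to $\longmapsto_{\mathit{LK}^\star_\tm}$. The lock-step lemma relating the LK and LK$^\star$ machines, together with the time-stamp threading carried out exactly as in Section~\ref{sec:secpt} (which leaves the reachable-state sets isomorphic), means it suffices to show that whenever $\state \longmapsto_{\mathit{LK}^\star_\tm} \state'$ and $\alpha(\state) \wt \astate$, some $\astate'$ with $\astate \longmapsto_{\widehat{\mathit{LK}^\star_\tm}} \astate'$ and $\alpha(\state') \wt \astate'$ exists. So assume $\state = \langle \control, \env, \store, \addr, \tm\rangle$ transitioned; exactly one of the five rules of the time-stamped LK$^\star$ machine applies, and I would split on it. Every case rests on the same three ingredients: (i) since $\alpha(\store) \wt \astore$, any storable value the concrete machine reads at an address --- a thunk $\delayed(\expr,\env')$, a computed value $\computed(\den,\env')$, or the continuation $\cont = \store(\addr)$ --- has its abstraction as a \emph{member} of the abstract set at the corresponding abstract address, which is precisely what licenses the abstract machine's non-deterministic choice of $\cont \in \astore(\addr)$ and of the dereferenced value; (ii) every abstract store update is a join, hence only grows the store, so from $\alpha(\store) \wt \astore$ we still get $\alpha(\store') \wt \astore'$; and (iii) $\aalloc$ and $\atick$ are, by construction, sound simulations of $\alloc$ and $\tick$, so the address and time components of $\astate'$ remain related to those of $\state'$.

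With these in hand the cases are mechanical. For variable lookup, the abstraction of the storable at $\env(\vv)$ lies in the abstract store's set at that address, so whichever subcase the concrete machine takes --- forcing a $\delayed$ entry and pushing a $\kok$ continuation, or returning a $\computed$ value --- the corresponding abstract rule fires, the new $\kok$ frame (when relevant) being installed by a join. For application, the concrete rule allocates two addresses; soundness of $\aalloc$ supplies matching abstract addresses and the new $\delayed$ thunk and $\ktk$ frame are added by join. For return to a $\kok$ continuation, the continuation is recovered as a member of $\astore(\addr)$ and the memoizing update $\store[\addr' \mapsto \computed(\den,\env)]$ is mirrored by a join into that slot. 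For return to a $\ktk$ continuation, the formal parameter is bound to the argument address and control moves into the function body.

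The points that actually require care --- the closest thing to an obstacle --- are the laziness-specific rules. The $\kok$-return rule overwrites a store location in the concrete machine but merely joins in the abstract, so the stale $\delayed(\cdots)$ entry survives abstractly; I must check this cannot break the invariant, and it cannot, because $\alpha(\store')$ at that slot is the singleton $\set{\computed(\den,\absmap(\env))}$ joined into a set into which that same element is also placed. Likewise, the two-address allocation in the application rule requires distinct concrete addresses but the abstract allocator may identify them; joining both a $\delayed$ and a $\ktk$ value into one abstract slot is sound precisely because the abstract transition already reads \emph{sets} at store locations. Threading the continuation choice $\cont \in \astore(\addr)$ through each rule is the same maneuver used in the CESK$^\star$ proof and, as there, is exactly where the hypothesis $\alpha(\state) \wt \astate$ is consumed. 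Beyond these checks the argument is a routine replay of Section~\ref{sec:cesk-soundness}.
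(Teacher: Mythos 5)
Your proposal is correct and matches the paper's intended argument: the paper omits an explicit proof, relying on the lock-step lemma, the structural abstraction map ``similar to that given in Section~\ref{sec:cesk-soundness},'' and the same case-split simulation argument used for the abstract CESK$^\star$ machine, which is exactly what you replay here. Your extra attention to the laziness-specific points (the concrete overwrite versus abstract join in the $\kok$-return rule, and the possible identification of the thunk and continuation addresses by $\aalloc$) correctly fills in details the paper leaves implicit.
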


\paragraph{Optimizing the machine through specialization:}
Ager \etal~optimize the LK machine by specializing application
transitions.  When the operand of an application is a variable, no
delayed computation needs to be constructed, thus ``avoiding the
construction of space-leaky chains of thunks.''  Likewise, when the
operand is a $\lambda$-abstraction, ``we can store the corresponding
closure as a computed value rather than as a delayed computation.''
Both of these optimizations, which conserve valuable abstract
resources, can be added with no trouble, as shown in Figure~\ref{fig:alko}.

\begin{figure}[h]
\[
\begin{array}{@{}l|r@{}}

\multicolumn{2}{c}{\astate \longmapsto_{\widehat{\mathit{LK}^\star}} \astate'
\text{, where }\cont\in\astore(\addr), 
\addrnext=\aalloc(\astate,\cont),
\tmnext=\atick(\astate,\cont)} \\[2mm]
\hline & \\

\langle\appform{\expr}{\vv}, \env, \astore, \addr,\tm\rangle 
&
\langle\expr, \env, \astore \join[\addrnext\mapsto
  \ktk(\env(\vv),\addr)], \addrnext,\tmnext\rangle
\\[1mm]
\langle\appform{\expr}{\den}, \env, \astore, \addr,\tm\rangle 
&
\langle\expr_0, \env, \astore \join [\addrnext \mapsto \computed(\den,\env), \addrnextnext\mapsto \ktk(\addrnext,\addr)], \addrnextnext,\tmnext\rangle
\\
&
\mbox{where }\addrnextnext=\aalloc(\astate,\cont)
\end{array}
\]
\caption{The abstract optimized LK$^\star$ machine.}
\label{fig:alko}
\end{figure}

\paragraph{Varying the machine through postponed thunk creation:}
Ager \etal~also vary the LK machine by postponing the construction of
a delayed computation from the point at which an application is the
control string to the point at which the operator has been evaluated
and is being applied.  The $\ktk$ continuation is modified to hold,
rather than the address of a delayed computation, the constituents of
the computation itself:
\[
\begin{grammar}
  \cont &\in &\s{Kont}&\produces &\mtk \opor \kok(\addr,\addr) \opor \ktk(\expr,\env,\addr)
  \text.
\end{grammar}
\]
The transitions for applications and functions
are replaced with those in Figure~\ref{fig:lkav}.
This allocates thunks when a function is applied,
rather than when the control string is an application.

\begin{figure}
\[
\begin{array}{@{}l@{\quad}|@{\quad}r@{}}

\multicolumn{2}{c}{\astate \longmapsto_{\widehat{\mathit{LK}'^\star}} \astate'
\text{, where }\cont\in\astore(\addr), 
\addrnext=\aalloc(\astate,\cont),
\tmnext=\atick(\astate,\cont)} \\[2mm]
\hline & \\

\langle\appform{\expr_0}{\expr_1}, \env, \astore, \addr\rangle 
&
\langle\expr_0, \env, \astore\sqcup[\addrnext\mapsto \ktk(\expr_1,\env,\addr)], \addrnext\rangle
\\[1mm]
\langle\lamform{\vv}{\expr},\env,\astore,\addr\rangle
&
\langle\expr,\env[\vv\mapsto\addrnext],\astore\sqcup[\addrnext\mapsto \delayed(\expr',\env')],\addrnextnext\rangle
\\
\mbox{ if }\cont=\ktk(\expr',\env',\addrnextnext)
\end{array}
\]
\caption{The abstract thunk postponing LK$^\star$ machine.}
\label{fig:lkav}
\end{figure}

As Ager~\etal\ remark, each of these variants gives rise to an
abstract machine.  From each of these machines, we are able to
systematically derive their abstractions.

\section{State and control}
\label{sec:realistic-features}

We have shown that store-allocated continuations make abstract
interpretation of the CESK machine and a lazy variant of Krivine's
machine straightforward.
In this section, we want to show that the tight correspondence between
concrete and abstract persists after the addition of language
features such as conditionals, side effects, exceptions and
continuations.
We tackle each feature, and present the additional machinery required
to handle each one.
In most cases, the path from a canonical concrete machine to
pointer-refined abstraction of the machine is so simple we only show
the abstracted system.
In doing so, we are arguing that this abstract machine-oriented
approach to abstract interpretation represents a flexible and viable
framework for building abstract interpreters.

\subsection{Conditionals, mutation, and control}

To handle conditionals, we extend the language with a new syntactic
form, $\ifform{\expr}{\expr}{\expr}$, and introduce a base value
$\schfalse$, representing false.
Conditional expressions induce a new continuation form:
$\mathbf{if}(\expr'_0,\expr'_1,\env,\addr)$, which represents the
evaluation context $E[\ifform{[\;]}{\expr_0}{\expr_1}]$ where $\env$
closes $\expr'_0$ to represent $\expr_0$, $\env$ closes $\expr'_1$ to
represent $\expr_1$, and $\addr$ is the address of the representation
of $E$.

Side effects are fully amenable to our approach;
we introduce Scheme's \texttt{set!} for mutating variables using the
$\appform{{\tt set!}\;}{\vv\; \expr}$ syntax.  The \texttt{set!}  form
evaluates its subexpression $\expr$ and assigns the value to the
variable $\vv$.  Although \texttt{set!} expressions are evaluated for
effect, we follow Felleisen \etal~and specify \texttt{set!}
expressions evaluate to the value of $\vv$ before it was
mutated~\cite[page 166]{dvanhorn:Felleisen2009Semantics}.  The
evaluation context $E[\appform{{\tt set!}\;}{\vv\; [\;]}]$ is
represented by $\mathbf{set}(\addr_0,\addr_1)$, where $\addr_0$ is the
address of $\vv$'s value and $\addr_1$ is the address of the
representation of $E$.

First-class control is introduced by adding a new base value {\tt
  callcc} which reifies the continuation as a new kind of applicable
value.  Denoted values are extended to include representations of
continuations.  Since continuations are store-allocated, we choose to
represent them by address.  When an address is applied, it represents
the application of a continuation (reified via {\tt callcc}) to a
value.  The continuation at that point is discarded and the applied
address is installed as the continuation.

The resulting grammar is:
\[
\begin{grammar}
 \expr &\in& \syn{Exp} &\produces& \dots \opor \ifform{\expr}{\expr}{\expr}
 \opor \appform{{\tt set!}\;}{\vv\; \expr}
\\
  \kappa &\in& \s{Kont} &\produces&
  \dots \opor \mathbf{if}(\expr,\expr,\env,\addr) \opor \mathbf{set}(\addr,\addr)
\\
  \den &\in& \Den &\produces& \dots \opor \schfalse \opor  {\tt callcc} \opor \addr
  \text.
\end{grammar}
\]
We show only the abstract transitions, which result from
store-allocating continuations, time-stamping, and abstracting the
concrete transitions for conditionals, mutation, and control.
The first three machine transitions deal with conditionals; here we
follow the Scheme tradition of considering all non-false values as
true.
The fourth and fifth transitions deal with mutation.

\begin{figure}
\[
\begin{array}{@{}l@{\;}|@{\;}r@{}}

\multicolumn{2}{c}{\astate \longmapsto_{\widehat{\mathit{CESK}^\star_\tm}} \astate'
\text{, where } 
\cont \in \astore(\addr), 
\addrnext = \aalloc(\astate,\cont),
\tmnext = \atick(\astate,\cont)
 } \\[2mm]
\hline & \\

\langle\ifform{\expr_0}{\expr_1}{\expr_2},\env,\astore,\addr,\tm\rangle
&
\langle\expr_0,\env,\astore \sqcup [\addrnext\mapsto \mathbf{if}(\expr_1,\expr_2,\env,\addr)],\addrnext,\tmnext\rangle
\\[1mm]
\langle\schfalse,\env,\astore,\addr,\tm\rangle
&
\langle\expr_1,\env',\astore,\addrnextnext,\tmnext\rangle
\\
\mbox{if }\cont=
   \mathbf{if}(\expr_0,\expr_1,\env',\addrnextnext)
\\[1mm]
\langle\den,\env,\astore,\addr,\tm\rangle
&
\langle\expr_0,\env',\astore,\addrnextnext,\tmnext\rangle
\\
\mbox{if }\cont=
   \mathbf{if}(\expr_0,\expr_1,\env',\addrnextnext),
\\
\text{and } 
\den\neq\schfalse
\\[1mm]
\langle\appform{{\tt set!}\;}{\vv\; \expr},\env,\astore,\addr,\tm\rangle
&
\langle\expr,\env,\astore \sqcup [\addrnext\mapsto \mathbf{set}(\env(\vv), \addr)],\addrnext,\tmnext\rangle
\\[1mm]
\langle\den,\env,\astore,\addr,\tm\rangle
&
\langle\den',\env,\astore \join [\addr' \mapsto \den],\addrnextnext,\tmnext\rangle
\\
\mbox{ if }\cont=
   \mathbf{set}(\addr',\addrnextnext) 
& \mbox{where }\den' \in \astore(\addr')
\\[1mm]
\langle \lamform{\vv}{\expr},\env,\astore,\addr,\tm\rangle 
&
\langle \expr,\env[\vv\mapsto \addrnext],\astore \join [\addrnext\mapsto \addrnextnext],\addrnextnext,\tmnext\rangle
\\
\mbox{if }\cont= \fnk({\tt callcc},\env',\addrnextnext)
&
\mbox{where }\addrnextnext=\aalloc(\astate,\cont)
\\[1mm]
\langle \addrnextnext,\env,\astore,\addr,\tm\rangle 
&
\langle \addr,\env,\astore,\addrnextnext,\tmnext\rangle 
\\
\mbox{if }\cont= \fnk({\tt callcc},\rho',\addr')
\\[1mm]
\langle \den,\env,\astore,\addr,\tm\rangle 
&
\langle \den,\env,\astore,\addrnextnext,\tmnext\rangle 
\\
\mbox{if }\cont= \fnk(\addrnextnext,\env',\addr')
\end{array}
\]
\caption{The abstract extended CESK$^\star$ machine.}
\end{figure}

The remaining three transitions deal with first-class control.
In the first of these, {\tt callcc} is being applied to a closure
value $\den$.  The value $\den$ is then ``called with the current
continuation'', \ie, $\den$ is applied to a value that represents the
continuation at this point.
In the second, {\tt callcc} is being applied to a continuation
(address).  When this value is applied to the reified continuation, it
aborts the current computation, installs itself as the current
continuation, and puts the reified continuation ``in the hole''.
%
Finally, in the third, a continuation is being applied; $\addrnextnext$ gets
thrown away, and $\den$ gets plugged into the continuation $\addrnext$.

In all cases, these transitions result from pointer-refinement,
time-stamping, and abstraction of the usual machine
transitions.

\subsection{Exceptions and handlers}

To analyze exceptional control flow, we extend the CESK machine with a
register to hold a stack of exception handlers.  
This models a reduction semantics in which we have two additional kinds of
evaluation contexts:
\[
\begin{grammar}
 && E &\produces& [\;] \opor \appform{E}{\expr} \opor \appform{\den}{E}
\opor \catchform{E}{\den}
\\
 && F &\produces& [\;] \opor \appform{F}{\expr} \opor \appform{\den}{F}
\\
&& H &\produces& [\;] \opor H[F[\catchform{H}{\den}]]\text,
\end{grammar}
\]
and the additional, context-sensitive, notions of reduction:
\begin{align*}
\catchform{E[\throwform{\den}]}{\den'} &\rightarrow \appform{\den'}{\den}\text,
&
\catchform{\den}{\den'} &\rightarrow \den\text.
\end{align*}
$H$ contexts represent a stack of exception handlers, while $F$
contexts represent a ``local'' continuation, \ie, the rest of the
computation (with respect to the hole) up to an enclosing handler, if
any. $E$ contexts represent the entire rest of the computation,
including handlers.

The language is extended with expressions for raising and
catching exceptions.  A new kind of continuation is introduced to
represent a stack of handlers.  In each frame of the stack, there is a
procedure for handling an exception and a (handler-free) continuation:
\[
\begin{grammar}
  \expr &\in& \syn{Exp} &\produces& 
  \dots \opor \throwform{\den}
  \opor \catchform{\expr}{\lamform{\vv}{\expr}}
\\
  \handls &\in& \s{Handl} &\produces& \mtk \opor \handk(\den,\env,\cont,\handls)
\end{grammar}
\]
An $\handls$ continuation represents a stack of exception handler
contexts, \ie, $\handk(\den',\env,\cont,\handls)$
represents $H[F[\catchform{[\;]}{\den}]]$, where
$\handls$ represents $H$, $\cont$ represents $F$, and $\env$ closes
$\den'$ to represent $\den$.

The machine includes all of the transitions of the CESK machine
extended with a $\handls$ component; these transitions are omitted
for brevity.
The additional transitions are given in Figure~\ref{fig:ceshk}.
This presentation is based on a textbook treatment of exceptions
and handlers~\cite[page
  135]{dvanhorn:Felleisen2009Semantics}.\footnote{To be precise,
  Felleisen \etal~present the CHC machine, a substitution based
  machine that uses evaluation contexts in place of continuations.
  Deriving the CESHK machine from it is an easy exercise.}
The initial configuration is given by:
\[
\inj_{\mathit{CESHK}}(\expr) = \langle\expr,\mte,\mts,\mtk,\mtk\rangle\text.
\]

\begin{figure}
\[
\begin{array}{@{}l@{\;}|@{\;}r@{}}

\multicolumn{2}{c}{\state \longmapsto_{\mathit{CESHK}} \state'} \\[1mm]
\hline & \\

\langle \den,\env,\store,\handk(\den',\env',\cont,\handls),\mtk\rangle 
&
\langle \den,\env,\store,\handls,\cont\rangle
\\[1mm]
\multicolumn{2}{@{}l}{%
\langle\throwform{\den},\env,\store,\handk(\lamform{\vv}{\expr},\env',\cont',\handls),\cont\rangle
}
\\
&
\langle \expr, \env'[\vv\mapsto \addr],\store[\addr\mapsto (\den,\env)],\handls,\cont'\rangle
\\
& \mbox{where }\addr\notin\dom(\store)
\\[1mm]
\langle \catchform{\expr}{\den},\env,\store,\handls,\cont\rangle
&
\langle \expr,\env,\store,\handk(\den,\env,\cont,\handls),\mtk\rangle
\end{array}
\]
\caption{The CESHK machine.}
\label{fig:ceshk}
\end{figure}

In the pointer-refined machine,
the grammar of handler continuations changes to the following:
\[
\begin{grammar}
  \handls &\in& \s{Handl} &\produces&
  \mtk \opor \handk(\den,\env,\addr,\haddr)\text,
\end{grammar}
\]
where $\haddr$ is used to range over addresses pointing to handler
continuations.
The notation $\addr_\mtk$ means $\addr$ such that $\store(\addr) =
\mtk$ in concrete case and $\mtk \in \astore(\addr)$ in the abstract,
where the intended store should be clear from context.
The pointer-refined machine is given in Figure~\ref{fig:cesha}.

\begin{figure}
\[
\begin{array}{@{}l@{\;}|@{\;}r@{}}

\multicolumn{2}{c}{\state \longmapsto_{\mathit{CESHK}^\star} \state'
\text{, where }
\handls= \store(\haddr), \cont= \store(\addr),
\addrnext \notin \dom(\store)
} 
\\[1mm]
\hline & \\

\langle \den,\env,\store,\haddr,\addr\rangle 
&
\langle \den,\env,\store,\haddr',\addr'\rangle
\\
\mbox{if }\handls = \handk(\den',\env',\addr',\haddr'),
\\
\text{and }\cont = \mtk
\\[1mm]
\langle \throwform{\den},\env,\store,\haddr,\addr\rangle 
&
\langle \expr, \env'[\vv\mapsto \addrnext],\store[\addrnext\mapsto (\den,\env)],\haddr',\addr'\rangle
\\
\mbox{if }\handls = \handk(\lamform{\vv}{\expr},\env',\addr',\haddr')
\\[1mm]
\langle \catchform{\expr}{\den},\env,\store,\haddr,\addr\rangle
&
\langle \expr,\env,\store[\addrnext\mapsto\handk(\den,\env,\addr,\haddr)],\addrnext,\addr_\mtk\rangle
\end{array}
\]
\caption{The CESHK$^\star$ machine.}
\label{fig:cesha}
\end{figure}

After threading time-stamps through the machine as done in
Section~\ref{sec:secpt}, the machine abstracts as usual
to obtain the machine in Figure~\ref{fig:acesaat}.
The only unusual step in the derivation is to observe that some
machine transitions rely on a choice of \emph{two} continuations from
the store; a handler and a local continuation.  Analogously to
Section~\ref{sec:acespt}, we extend $\atick$ and $\aalloc$ to take two
continuation arguments to encode the choice:
\begin{align*}
\atick &: \aState \times \s{Handl} \times \s{Kont} \rightarrow \s{Time}\text,
\\
\aalloc &: \aState \times \s{Handl} \times \s{Kont} \to \s{Addr}
\text.
\end{align*}

\begin{figure}
\[
\begin{array}{@{}l@{\quad}|@{\qquad\quad\qquad}r@{}}

\multicolumn{2}{@{}c@{}}{\astate \longmapsto_{\widehat{\mathit{CESHK}^\star_\tm}} \astate'
\text{, where }
\handls\in \astore(\haddr), \cont\in \astore(\addr),
\addrnext=\aalloc(\astate,\handls,\cont),
} 
\\
\multicolumn{2}{@{}c@{}}{
\tmnext =\atick(\astate,\handls,\cont)}
\\[2mm]
\hline & \\

\langle \den,\env,\astore,\haddr,\addr,\tm\rangle 
&
\langle \den,\env,\astore,\haddr',\addr',\tmnext\rangle
\\
\mbox{if }\handls= \handk(\den',\env',\addr',\haddr'),
\\
\text{and }\cont= \mtk
\\[1mm]
\langle \throwform{\den},\env,\astore,\haddr,\addr,\tm\rangle 
\\
\mbox{if }\handls = \handk(\lamform{\vv}{\expr},\env',\addr',\haddr')
\\
\multicolumn{2}{r@{}}{ 
\langle \expr, \env'[\vv\mapsto \addrnext],\astore\sqcup[\addrnext\mapsto(\den,\env)],\haddr',\addr',\tmnext\rangle
}
\\[1mm]
\langle \catchform{\expr}{\den},\env,\astore,\haddr,\addr,\tm\rangle
\\
\multicolumn{2}{r@{}}{
\
\langle \expr,\env,\astore\sqcup[\addrnext\mapsto\handk(\den,\env,\addr,\haddr)],\addrnext,\addr_\mtk,\tmnext\rangle}
\end{array}
\]
\caption{The abstract CESHK$^\star$ machine.}
\label{fig:acesaat}
\end{figure}

\section{Abstract garbage collection}
\label{sec:garbage-collection}

Garbage collection determines when a store location has become
unreachable and can be re-allocated.
This is significant in the abstract semantics because an address may be
allocated to multiple values due to finiteness of the address space.
Without garbage collection, the values allocated to this common
address must be joined, introducing imprecision in the analysis (and
inducing further, perhaps spurious, computation).
By incorporating garbage collection in the abstract semantics, the
location may be proved to be unreachable and safely \emph{overwritten}
rather than joined, in which case no imprecision is introduced.

Like the rest of the features addressed in this paper, we can
incorporate abstract garbage collection into our static analyzers by a
straightforward pointer-refinement of textbook accounts of concrete
garbage collection, followed by a finite store abstraction.

Concrete garbage collection is defined in terms of a GC machine that
computes the reachable addresses in a store~\cite[page
  172]{dvanhorn:Felleisen2009Semantics}:
\[
\begin{array}{l}
\langle\mathcal{G},\mathcal{B},\store\rangle 
\longmapsto_{\mathit{GC}}
\langle(\mathcal{G}\cup \liveloc_\store(\store(\addr)) \setminus
(\mathcal{B}\cup\{\addr\})), \mathcal{B}\cup\{\addr\}, \store\rangle
\\
\mbox{if }\addr \in \mathcal{G}\text.
\end{array}
\]
This machine iterates over a set of reachable but unvisited ``grey''
locations $\mathcal{G}$.  On each iteration, an element is removed and
added to the set of reachable and visited ``black'' locations
$\mathcal{B}$.  Any newly reachable and unvisited locations, as
determined by the ``live locations'' function $\liveloc_\store$, are
added to the grey set.  When there are no grey locations, the black
set contains all reachable locations.  Everything else is garbage.

The live locations function computes a set of locations which may be
used in the store.  
 Its definition will vary based on the particular
machine being garbage collected, but the definition appropriate for
the CESK$^\star$ machine of Section~\ref{sec:ceskp} is
\begin{align*}
\liveloc_\store(\expr) &= \emptyset
\\
\liveloc_\store(\expr,\env) &= \liveloc_\store(\restrict{\env}{\fv(\expr)})
\\
\liveloc_\store(\env) &= \mathit{rng}(\env)
\\
\liveloc_\store(\mtk) &= \emptyset
\\
\liveloc_\store(\fnk(\den,\env,\addr)) &= \{\addr\} \cup \liveloc_\store(\den,\env) \cup \liveloc_\store(\store(\addr))
\\
\liveloc_\store(\ark(\expr,\env,\addr)) &= \{\addr\}\cup \liveloc_\store(\expr,\env) \cup \liveloc_\store(\store(\addr))\text.
\end{align*}
We write $\restrict\env{\fv(\expr)}$ to
mean $\env$ restricted to the domain of free variables in $\expr$.
We assume the least-fixed-point solution in the calculation of the
function $\liveloc$ in cases where it recurs on itself.

The pointer-refinement of the machine requires parameterizing the
$\liveloc$ function with a store used to resolve pointers to
continuations. 
A nice consequence of this parameterization is that we can re-use
$\liveloc$ for \emph{abstract garbage collection} by supplying it an
abstract store for the parameter.
Doing so only necessitates extending $\liveloc$ to the case of sets of
storable values:
\begin{align*}
\liveloc_\store(S) &= \bigcup_{s\in S} \liveloc_\store(s)
\end{align*}

The CESK$^\star$ machine incorporates garbage collection by a
transition rule that invokes the GC machine as a subroutine to remove
garbage from the store (Figure~\ref{fig:gc}).
The garbage collection transition introduces non-determinism to the
CESK$^\star$ machine because it applies to any machine state and thus
overlaps with the existing transition rules.
The non-determinism is interpreted as leaving the choice of
\emph{when} to collect garbage up to the machine.

The abstract CESK$^\star$ incorporates garbage collection by the
\emph{concrete garbage collection transition}, \ie, we re-use the
definition in Figure~\ref{fig:gc} with an abstract store, $\astore$,
in place of the concrete one.
Consequently, it is easy to verify abstract garbage collection
approximates its concrete counterpart.

\begin{figure}
\[
\begin{array}{@{}l@{\qquad}|@{\qquad}r@{}}

\multicolumn{2}{c}{\state \longmapsto_{\mathit{CESK}^\star} \state'} \\[1mm]
\hline & \\

\langle \expr,\env,\store,\addr\rangle
\qquad\qquad\qquad
&
\langle \expr,\env,\{\langle\addrnext,\store(\addrnext)\rangle\ |\ \addrnext \in \mathcal{L}\},\addr\rangle
\\
\multicolumn{2}{@{}l}{
\mbox{if }\langle\liveloc_\store(\expr,\env) \cup \liveloc_\store(\store(\addr)),\{\addr\},\store\rangle \multistep_{\mathit{GC}} \langle\emptyset,\mathcal{L},\store\rangle}
\end{array}
\]
\caption{The GC transition for the CESK$^\star$ machine.}
\label{fig:gc}
\end{figure}

The CESK$^\star$ machine may collect garbage at any point in the
computation, thus an abstract interpretation must soundly approximate
\emph{all possible choices} of when to trigger a collection, which 
the abstract CESK$^\star$ machine does correctly.
This may be a useful analysis \emph{of} garbage collection, however it
fails to be a useful analysis \emph{with} garbage collection: for
soundness, the abstracted machine must consider the case in which
garbage is never collected, implying no storage is reclaimed to
improve precision.

However, we can leverage abstract garbage collection to reduce the
state-space explored during analysis and to improve precision and
analysis time.
This is achieved (again) by considering properties of the
\emph{concrete} machine, which abstract directly; in this case,
we want the concrete machine to deterministically collect garbage.
Determinism of the CESK$^\star$ machine is restored by defining the
transition relation as a non-GC transition (Figure~\ref{fig:cesa})
followed by the GC transition (Figure~\ref{fig:gc}).
This state-space of this concrete machine is ``garbage free'' and
consequently the state-space of the abstracted machine is ``abstract
garbage free.''

In the concrete semantics, a nice consequence of this property is that
although continuations are allocated in the store, they are
deallocated as soon as they become unreachable, which corresponds to
when they would be popped from the stack in a non-pointer-refined
machine.  Thus the concrete machine really manages continuations like
a stack.

Similarly, in the abstract semantics, continuations are deallocated as
soon as they become unreachable, which often corresponds to when they
would be popped.  We say often, because due to the finiteness of the
store, this correspondence cannot always hold.
However, this approach gives a good finite approximation to infinitary
stack analyses that can always match calls and returns.

\section{Abstract stack inspection}
\label{sec:CM}

In this section, we derive an abstract interpreter for the static
analysis of a higher-order language with stack inspection.  Following
the outline of Section~\ref{sec:cek-to-acesk} and~\ref{sec:krivine},
we start from the tail-recursive CM machine of Clements and
Felleisen~\cite{dvanhorn:Clements2004Tailrecursive}, perform a pointer
refinement on continuations, then abstract the semantics by a
parameterized bounding of the store.

\subsection{The \texorpdfstring{$\boldsymbol{\lambda_{\mathrm{sec}}}$}{lambda-sec}-calculus and stack-inspection}

The $\lambda_{\mathrm{sec}}$-calculus of Pottier, Skalka, and Smith is
a call-by-value $\lambda$-calculus model of higher-order stack
inspection~\cite{dvanhorn:Pottier2005Systematic}.
We present the language as given by Clements and
Felleisen~\cite{dvanhorn:Clements2004Tailrecursive}.

All code is statically annotated with a given set of permissions $R$,
chosen from a fixed set $\Perm$.  A computation whose source code was
statically annotated with a permission may \emph{enable} that
permission for the dynamic extent of a subcomputation.  The
subcomputation is privileged so long as it is annotated with the same
permission, and every intervening procedure call has likewise been
annotated with the privilege.
\[
\begin{grammar}
  \expr &\in& \syn{Exp} &\produces&\dots  
  \opor \failexpr \opor \grantform{R}{\expr}
  \opor
  \\
  &&&& 
  \testform{R}{\expr}{\expr}
  \opor \frameform{R}{\expr}
\end{grammar}
\]
A $\failexpr$ expression signals an exception if evaluated; by
convention it is used to signal a stack-inspection failure.
A $\frameform{R}{\expr}$ evaluates $\expr$ as the principal $R$,
representing the permissions conferred on $\expr$ given its origin.
A $\grantform{R}{\expr}$ expression evaluates as $\expr$ but with the
permissions extended with $R$ enabled.
A $\testform{R}{\expr_0}{\expr_1}$ expression evaluates to $\expr_0$
if $R$ is enabled and $\expr_1$ otherwise.

A trusted annotator consumes a program and the set of permissions it
will operate under and inserts frame expressions around each
$\lambda$-body and intersects all grant expressions with this set of
permissions.  We assume all programs have been properly annotated.

Stack inspection can be understood in terms of an $\OK$ predicate on
an evaluation contexts and permissions.
The predicate determines whether the given permissions are enabled
for a subexpression in the hole of the context.
The $\OK$ predicate holds whenever the context can be traversed from
the hole outwards and, for each permission, find an enabling grant
context without first finding a denying frame context.


\subsection{The CM machine}

The CM (continuation-marks) machine of Clements and Felleisen is a
properly tail-recursive extended CESK machine for interpreting
higher-order languages with
stack-inspection~\cite{dvanhorn:Clements2004Tailrecursive}.

In the CM machine, continuations are annotated with
\emph{marks}~\cite{dvanhorn:Clements2001Modeling}, which, for the
purposes of stack-inspection, are finite maps from permissions
to $\{\no,\grant\}$:
\[
\begin{array}{rcl}
\cont & \produces & \mtk^m \opor \ark^m(\expr,\env,\cont)
\opor \fnk^m(\den,\env,\cont)\text.
\end{array}
\]
We write $\cont[R\mapsto c]$ to mean update the marks on $\cont$ to
$m[R\mapsto c]$.

The CM machine is defined in Figure~\ref{fig:cm}
(transitions that are straightforward adaptations of the
corresponding CESK$^\star$ transitions to incorporate continuation
marks are omitted).
It relies on the
$\OK$ predicate to determine whether the permissions in $R$ are
enabled.
The $\OK$ predicate performs the traversal of the context (represented
as a continuation) using marks to determine which permissions have
been granted or denied.

The semantics of a program is given by the set of reachable states
from an initial machine configuration:
\[
\inj_{\mathit{CM}}(\expr) = \langle\expr,\mte,[\addr_0 \mapsto \mtk^\emptyset],\addr_0\rangle\text.
\]

\begin{figure}
\begin{gather*}
\begin{array}{l|r}
\multicolumn{2}{c}{\state \longmapsto_{\mathit{CM}} \state'} \\[1mm]
\hline  &
\\
\langle\failexpr,\env,\store,\cont\rangle &
\langle\failexpr,\env,\store,\mtk^\emptyset\rangle
\\[1mm]
\langle\frameform{R}{\expr},\env,\store,\cont\rangle &
\langle\expr,\env,\store,\cont[\overline{R}\mapsto\no]\rangle
\\[1mm]
\langle\grantform{R}{\expr},\env,\store,\cont\rangle &
\langle\expr,\env,\store,\cont[R \mapsto\grant]\rangle
\\[1mm]
\langle\testform{R}{\expr_0}{\expr_1},\env,\store,\cont\rangle &
 \begin{cases}
\langle\expr_0,\env,\store,\cont\rangle  & \mbox{ if }\OK(R,\cont),
\\
\langle\expr_1,\env,\store,\cont\rangle &\mbox{ otherwise}
\end{cases}
\end{array}
\\
\begin{array}{@{}rcl@{}}
\OK(\emptyset,\cont)\\[1mm]
\OK(R,\mtk^m) & \iff & (R\cap m^{-1}(\no)=\emptyset)\\[1mm]
\left.
\begin{array}{l}
\OK(R,\fnk^m(\den,\env,\cont)) \\
\OK(R,\ark^m(\expr,\env,\cont)) 
\end{array}
\right\}
& \iff &
\begin{array}{@{}l}
(R \cap m^{-1}(\no) = \emptyset)\;\wedge \\
\ \ \OK(R \setminus m^{-1}(\grant),\cont)
\end{array}
\end{array}
\end{gather*}
\caption{The CM machine and $\OK$ predicate.}
\label{fig:cm}
\end{figure}

\subsection{The abstract CM\texorpdfstring{$^\star$}{*} machine}

Store-allocating continuations, time-stamping, and
bounding the store yields the transition system given in
Figure~\ref{fig:acm}.
The notation $\astore(\addr)[R\mapsto c]$ is used to mean $[R\mapsto
  c]$ should update \emph{some} continuation in $\astore(\addr)$, \ie,
\[
\astore(\addr)[R\mapsto c] = \astore[\addr \mapsto \astore(\addr)
  \setminus \{\cont\}\cup \{\cont[R\mapsto c]\}]\text,
\]
for some $\cont \in \astore(\addr)$.
It is worth noting that continuation marks are updated, not joined, in
the abstract transition system.

\begin{figure}
\begin{gather*}
\begin{array}{l|r}
\multicolumn{2}{c}{\astate \longmapsto_{\widehat{\mathit{CM}^\star}} \astate'} \\[2mm]
\hline & \\
\langle\failexpr,\env,\astore,\addr\rangle &
\langle\failexpr,\env,\astore,\addr_\mtk\rangle
\\[1mm]
\langle\frameform{R}{\expr},\env,\astore,\addr\rangle 
&
\langle\expr,\env,\astore(\addr)[\overline{R}\mapsto\no],\addr\rangle
\\[1mm]
\langle\grantform{R}{\expr},\env,\astore,\addr\rangle 
&
\langle\expr,\env,\astore(\addr)[R \mapsto\grant],\addr\rangle
\\[1mm]
\langle\testform{R}{\expr_0}{\expr_1},\env,\astore,\addr\rangle 
&
\begin{cases}
\langle\expr_0,\env,\astore,\addr\rangle  & \mbox{ if }\AOK(R,\astore,\addr),
\\ 
\langle\expr_1,\env,\astore,\addr\rangle &\mbox{ otherwise.}
\end{cases}
\end{array}
\\
\begin{array}{@{}r@{\;}c@{\;}l}
\AOK(\emptyset,\astore,\addr)\\[1mm]
\AOK(R,\astore,\addr) & \iff & (R\cap m^{-1}(\no)=\emptyset)\\
\mbox{if }\astore(\addr) \ni \mtk^m\\[1mm]
\AOK(R,\astore,\addr) 
& \iff &
(R \cap m^{-1}(\no) = \emptyset)\;\wedge
\\
\mbox{if }\astore(\addr) \ni \fnk^m(\den,\env,\addrnext)
&&
\ \AOK(R \setminus m^{-1}(\grant),\astore,\addrnext)
\\
\mbox{or }\astore(\addr) \ni \ark^m(\expr,\env,\addrnext)
\end{array}
\end{gather*}
\caption{The abstract CM$^\star$ machine.}
\label{fig:acm}
\end{figure}

The $\AOK$ predicate (Figure~\ref{fig:acm}) approximates the pointer
refinement of its concrete counterpart $\OK$, which can be understood
as tracing a path through the store corresponding to traversing the
continuation.  The abstract predicate holds whenever there exists such
a path in the abstract store that would satisfy the concrete
predicate:
Consequently, in analyzing $\testform{R}{\expr_0}{\expr_1}$, $\expr_0$
is reachable only when the analysis can prove the $\OK^\star$
predicate holds on some path through the abstract store.

It is straightforward to define a structural abstraction map and
verify the abstract CM$^\star$ machine is a sound approximation of its
concrete counterpart:
\begin{theorem}[Soundness of the Abstract CM$^\star$ Machine]\ \\
  If $\state \longmapsto_{\mathit{CM}} \state' \text{ and }
  \alpha(\state) \wt \astate$, then there exists an abstract state
  $\astate'$, such that $\astate
  \longmapsto_{\widehat{\mathit{CM}^\star_\tm}} \astate'$ and
  $\alpha(\state') \wt \astate'$.
\end{theorem}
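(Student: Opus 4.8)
The plan is to follow the template of the soundness proof for the abstract CESK$^\star$ machine. First I would establish the two bridging lemmas that the prose asserts: the CM machine runs in lock-step with its pointer-refined variant CM$^\star$, and CM$^\star$ runs in lock-step with the time-stamped CM$^\star_\tm$ machine. These are the exact analogues of Lemmas~\ref{lem:pointer-equiv} and~\ref{lem:time-stamp-equiv}, obtained by the same refactorings, the only new wrinkle being that each continuation constructor now carries a mark component $m$ that is threaded through unchanged. Together they reduce the theorem to a single-step simulation between $\longmapsto_{\mathit{CM}^\star_\tm}$ and $\longmapsto_{\widehat{\mathit{CM}^\star_\tm}}$. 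I would then extend the abstraction map of Figure~\ref{fig:abs-map} with the clauses $\absmap(\mtk^m)=\mtk^m$, $\absmap(\ark^m(\expr,\env,\addr))=\ark^m(\expr,\absmap(\env),\absmap(\addr))$, and likewise for $\fnk^m$ --- i.e.\ marks are copied verbatim --- and lift $\wt$ member-wise as before.

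The core is a case analysis on the concrete transition, under the hypothesis $\absmap(\state)\wt\astate$. The transitions that are straightforward adaptations of CESK$^\star$ transitions (variable lookup, application, argument shift, $\betavalue$-contraction, conditionals, mutation, and so on, already elided from Figure~\ref{fig:cm}) go through exactly as in the abstract CESK$^\star$ proof: each is either deterministic and holds by calculation, or it hinges only on the abstract store over-approximating the concrete store, which is given by $\absmap(\state)\wt\astate$. The genuinely new cases are $\failexpr$, $\frameform{R}{\expr}$, $\grantform{R}{\expr}$, and $\testform{R}{\expr_0}{\expr_1}$. The $\failexpr$ case mirrors the concrete reset to an empty continuation and poses nothing beyond what the $\addr_\mtk$ notation already requires in the CESHK$^\star$ machine. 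For $\frameform{R}{\expr}$ and $\grantform{R}{\expr}$ the subtlety flagged in the prose matters: marks are \emph{updated}, not joined. The abstract rule rewrites \emph{some} continuation in $\astore(\addr)$; for the simulation I would choose the one equal to $\absmap(\cont)$, where $\cont$ is the concrete continuation at $\addr$, so that $\absmap(\cont)[\comp{R}\mapsto\no]=\absmap(\cont[\comp{R}\mapsto\no])$ (and similarly for $\grant$) lands in the updated set and $\absmap(\state')\wt\astate'$ is preserved; no other store entry moves.

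The heart of the argument --- and the step I expect to be the main obstacle --- is $\testform{R}{\expr_0}{\expr_1}$, which forces us to relate $\OK$ to $\AOK$. I would prove the auxiliary lemma: if $\absmap(\store)\wt\astore$, $\absmap(\cont)\in\astore(\addr)$, and $\OK(R,\cont)$ holds against $\store$, then $\AOK(R,\astore,\addr)$ holds. This goes by induction on the length of the (finite, acyclic) continuation chain rooted at $\cont$ in the concrete store: the $\mtk^m$ base case matches the two predicates' $R\cap m^{-1}(\no)=\emptyset$ clauses directly, and in the $\fnk^m$/$\ark^m$ step the condition $R\cap m^{-1}(\no)=\emptyset$ is shared, the successor continuation $\cont'$ of $\cont$ --- i.e.\ $\cont=\fnk^m(\den,\env,\addrnext)$ with $\store(\addrnext)=\cont'$ --- satisfies $\absmap(\cont')\in\astore(\absmap(\addrnext))$ because $\absmap(\store)\wt\astore$, and the induction hypothesis supplies $\AOK(R\setminus m^{-1}(\grant),\astore,\absmap(\addrnext))$, which is exactly the recursive premise of $\AOK$'s $\fnk^m$ clause and hence the witnessing path through the abstract store. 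With this lemma the then-branch of $\testform{R}{\expr_0}{\expr_1}$ is immediate. The residual difficulty is the else-branch: concretely $\expr_1$ is reached precisely when $\OK$ fails, yet $\neg\OK(R,\cont)$ does \emph{not} imply $\neg\AOK(R,\astore,\addr)$, since a conflated abstract store may harbour a spurious permitting path. Discharging this case therefore requires reading the two clauses of the abstract $\testform{}{}{}$ rule as overlapping nondeterministic rules rather than a deterministic case split --- the $\expr_1$ successor must remain reachable whenever the analysis cannot rule $\AOK$ out as a must-fact --- and I would make this reading explicit before closing the case. With that in hand, every remaining obligation follows the abstract CESK$^\star$ template verbatim.
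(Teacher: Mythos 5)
Your overall route is the paper's own, largely implicit, one: the paper offers no proof of this theorem beyond the remark that a structural abstraction map makes it ``straightforward,'' and the intended argument is exactly the abstract CESK$^\star$ template you follow --- lock-step lemmas reducing to the pointer-refined, time-stamped machine, a mark-preserving structural $\absmap$, and a case analysis in which each nondeterministic case hinges on the abstract store approximating the concrete one. Your $\OK$-to-$\AOK$ path-lifting lemma is the right formalization of the paper's prose characterization of $\AOK$ (``holds whenever there exists such a path in the abstract store that would satisfy the concrete predicate''), and your point about the else-branch of $\testform{R}{\expr_0}{\expr_1}$ is a genuine refinement the paper glosses over: under the literal deterministic ``otherwise'' of Figure~\ref{fig:acm}, a spurious abstract path that makes $\AOK$ hold while the concrete $\OK$ fails leaves no abstract successor above the concrete $\expr_1$-state (the order on expressions is flat), so the overlapping, nondeterministic reading you adopt --- the $\expr_1$ branch stays available unless every path through $\astore$ passes the check --- is exactly what the stated simulation needs, and is consistent with the paper's remark that $\expr_0$ is reachable \emph{only when} $\AOK$ can be established.

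The one step of your write-up that does not go through as stated is the $\frameform{R}{\expr}$/$\grantform{R}{\expr}$ case. The paper's abstract mark update is a \emph{strong} update: $\astore(\addr)[\overline{R}\mapsto\no]$ removes the chosen continuation from the set at $\addr$ and inserts the re-marked one. Choosing $\absmap(\cont)$ as the continuation to update does place $\absmap(\cont[\overline{R}\mapsto\no])$ in the updated set, but it also deletes $\absmap(\cont)$ --- and that element may still be needed to cover a \emph{different} concrete continuation: if another live concrete address abstracts to the same $\aaddr$ and holds a continuation with the same abstraction, then $\absmap(\store')$ still contains $\absmap(\cont)$ at $\aaddr$ while your $\astore'$ does not, so $\absmap(\state')\wt\astate'$ fails, and since the only nondeterminism in this rule is which member of $\astore(\addr)$ to update, no alternative witness rescues the step. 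So ``no other store entry moves'' is not enough; closing this case requires either an argument that such conflation cannot arise, or weakening the update to a join --- it deserves exactly the same scrutiny you gave the test rule, and it is the delicate point lurking behind the paper's observation that marks are ``updated, not joined.''
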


\section{Widening to improve complexity}
\label{sec:widening}

If implemented na\"ively, it takes time exponential in the size of the
input program to compute the reachable states of the abstracted machines.
Consider the size of the state-space for the abstract time-stamped
\CESP{} machine:
\begin{align*}
& \abs{\syn{Exp}
    \times 
    \s{Env}
    \times
    \sa{Store}
    \times 
    \s{Addr}
    \times
    \s{Time}
  }
\\ 
=\; & \abs{\syn{Exp}}
\times
\abs{\s{Addr}}^{\abs{\syn{Var}}}
\times
\abs{\Storable}^{\abs{\s{Addr}}}
\times 
\abs{\s{Addr}}
\times
\abs{\s{Time}}
\text.
\end{align*}
Without simplifying any further, we clearly have an exponential number
of abstract states.

To reduce complexity, we can employ widening in the form of Shivers's
single-threaded store~\cite{mattmight:Shivers:1991:CFA}.
To use a single threaded store, we have to reconsider the abstract evaluation function itself.
Instead of seeing it as a function that returns the set of reachable
states, it is a function that returns a set of partial states plus a
single globally approximating store, \ie, $\avf : \syn{Exp} \to
\s{System}$, where:
\begin{align*}
  \s{System} &= \Pow{\syn{Exp} \times \s{Env} \times \s{Addr} \times \s{Time}} \times \sa{Store}
  \text.
\end{align*}
We compute this as a fixed point of a monotonic function, $f$:
\begin{align*}
  f &: \s{System} \to \s{System}
  \\
  f(C,\astore) &= (C',\astore'') \text{ where }
  \\
  Q' &= \setbuild{ (c',\astore') }{ c \in C \text{ and } (c,\astore) \longmapsto (c',\astore') } 
  \\
  (c_0,\astore_0) &\cong \inj(\expr)
  \\
  C' &= C \union \setbuild{ c' }{ (c',\_) \in Q' } \union \set{c_0}
  \\
  \astore'' &= \astore \join \bigjoin_{  (\_,\astore') \in Q' }  \astore'
\text,
\end{align*}
so that \(\avf(\expr) = \lfp(f)\).
The maximum number of
iterations of the function $f$
times the cost of each iteration
bounds the complexity of the analysis.

\paragraph{Polynomial complexity for monovariance:}
It is straightforward to compute the cost of a monovariant (in our
framework, a ``0CFA-like'') analysis with this widening.
In a monovariant analysis, environments disappear; a monovariant
system-space simplifies to:
\begin{align*}
& \s{System}_0
  \\
  =&\
  \Pow{
    \syn{Exp} \times
    \syn{Lab} \times 
    \syn{Lab}_\bot} 
  \\
  & \times 
  (
  \overbrace{(\syn{Var} + \syn{Lab})}^{\text{addresses}} \to 
  \overbrace{(\syn{Exp} \times \syn{Lab})}^{\text{\textbf{fn} conts}} + 
  \overbrace{(\syn{Exp} \times \syn{Lab})}^{\text{\textbf{ar} conts}} + 
  \syn{Lam}
  )
  \text.
\end{align*}
If ascended monotonically, one could add one new partial state each
time or introduce a new entry into the global store.
Thus, the maximum number of monovariant iterations is:
\begin{align*}
  &\abs{\syn{Exp}} \times
  \abs{\syn{Lab}}^2
  +1
 \\
  +\; &
  \abs{\syn{Var} + \syn{Lab}} 
  \times
  (2\abs{\syn{Exp} \times \syn{Lab}} + \abs{\syn{Lam}})
  \text,
\end{align*}
which is cubic in the size of the program. 

\section{Related work}

The study of abstract machines for the $\lambda$-calculus began with
Landin's SECD machine~\cite{dvanhorn:landin-64}, the theory of
abstract interpretation with the POPL papers of the
Cousots'~\cite{mattmight:Cousot:1977:AI,mattmight:Cousot:1979:Galois},
and static analysis of the $\lambda$-calculus with Jones's coupling of
abstract machines and abstract
interpretation~\cite{mattmight:Jones:1981:LambdaFlow}.
All three have been active areas of research since their inception,
but only recently have well known abstract machines been connected
with abstract interpretation by Midtgaard and
Jensen~\cite{dvanhorn:midtgaard-jensen-sas-08,dvanhorn:Midtgaard2009Controlflow}.
We strengthen the connection by demonstrating a general technique for
abstracting abstract machines.

\paragraph{Abstract interpretation of abstract machines:}

%

The approximation of abstract machine states for the analysis of
higher-order languages goes back to
Jones~\cite{mattmight:Jones:1981:LambdaFlow},
who argued abstractions of regular tree automata could solve the
problem of recursive structure in environments.
We re-invoked that wisdom to eliminate the recursive structure of
continuations by allocating them in the store.

Midtgaard and Jensen present a 0CFA for a CPS $\lambda$-calculus
language~\cite{dvanhorn:midtgaard-jensen-sas-08}.  The approach is
based on Cousot-style calculational abstract
interpretation~\cite{dvanhorn:Cousot98-5}, applied to a functional
language.  Like the present work, Midtgaard and Jensen start with an
``off-the-shelf'' abstract machine for the concrete semantics (in this
case, the CE machine of Flanagan, \etal~\cite{dvanhorn:Flanagan1993Essence}) and employ a reachable-states
model.  They then compose well-known Galois connections to reveal a
0CFA with reachability in the style of
Ayers~\cite{dvanhorn:ayers-phd93}.\footnote{Ayers derived an abstract
  interpreter by transforming (the representation of) a denotational
  continuation semantics of Scheme into a state transition system (an
  abstract machine), which he then approximated using Galois
  connections~\cite{dvanhorn:ayers-phd93}.}  The CE machine is not
sufficient to interpret direct-style programs, so the analysis is
specialized to programs in continuation-passing style.  Later work by
Midtgaard and Jensen went on to present a similar calculational
abstract interpretation treatment of a monomorphic CFA for an ANF
$\lambda$-calculus~\cite{dvanhorn:Midtgaard2009Controlflow}.  The
concrete semantics are based on reachable states of the C$_a$EK
machine~\cite{dvanhorn:Flanagan1993Essence}.  The abstract semantics
approximate the control stack component of the machine by its top
element, which is similar to the labeled machine abstraction given in
Section~\ref{sec:labelled-acesk} when $k=0$.

Although our approach is not calculational like Midtgaard and
Jensen's, it continues in their tradition by applying abstract
interpretation to off-the-shelf tail-recursive machines.  We extend
the application to direct-style machines for a $k$-CFA-like
abstraction that handles tail calls, laziness, state, exceptions,
first-class continuations, and stack inspection.  We have extended
\emph{return flow analysis} to a completely direct style (no ANF or
CPS needed) within a framework that accounts for polyvariance.

Harrison gives an abstract interpretation for a higher-order language
with control and state for the purposes of automatic
parallelization~\cite{mattmight:Harrison:1989:Parallelization}.
Harrison maps Scheme programs into an imperative intermediate
language, which is interpreted on a novel abstract machine.  The
machine uses a procedure string approach similar to that given in
Section~\ref{sec:labelled-acesk} in that the store is addressed by
procedure strings.
%
Harrison's first machine employs higher-order values to represent
functions and continuations and he notes, ``the straightforward
abstraction of this semantics leads to abstract domains containing
higher-order objects (functions) over reflexive domains, whereas our
purpose requires a more concrete compile-time representation of the
values assumed by variables. We therefore modify the semantics such
that its abstraction results in domains which are both finite and
non-reflexive.''
Because of the reflexivity of denotable values, a direct abstraction is not
possible, so he performs closure conversion on the (representation of)
the semantic function.
Harrison then abstracts the machine by bounding the procedure string
space (and hence the store) via an abstraction he calls stack
configurations, which is represented by a finite set of members, each
of which describes an infinite set of procedure strings. 

To prove that Harrison's abstract interpreter is correct he argues
that the machine interpreting the translation of a program in the
intermediate language corresponds to interpreting the program as
written in the standard semantics---in this case, the denotational
semantics of R$^3$RS.  On the other hand, our approach relies on well
known machines with well known relations to calculi, reduction
semantics, and other
machines~\cite{mattmight:Felleisen:1987:Dissertation,dvanhorn:Danvy:DSc}.
These connections, coupled with the strong similarities
between our concrete and abstract machines, result in minimal proof
obligations in comparison.  Moreover, programs are analyzed in
direct-style under our approach.





\paragraph{Abstract interpretation of lazy languages:}

Jones has analyzed non-strict functional
languages~\cite{mattmight:Jones:1981:LambdaFlow,dvanhorn:Jones2007Flow},
but that work has only focused on the by-name aspect of laziness and
does not address memoization as done here.
Sestoft examines flow analysis for lazy languages and uses abstract
machines to prove
soundness~\cite{mattmight:Sestoft:1991:Dissertation}.  In particular,
Sestoft presents a lazy variant of Krivine's machine similar to that
given in Section~\ref{sec:krivine} and proves analysis is sound with
respect to the machine.  Likewise, Sestoft uses Landin's SECD machine
as the operational basis for proving globalization optimizations
correct.  Sestoft's work differs from ours in that analysis is
developed separately from the abstract machines, whereas we derive
abstract interpreters directly from machine definitions.
Fax\'en uses a type-based flow analysis approach to analyzing a
functional language with explicit thunks and evals, which is intended
as the intermediate language for a compiler of a lazy
language~\cite{dvanhorn:Faxen1995Optimizing}.
In contrast, our approach makes no assumptions about the typing
discipline and analyzes source code directly.




\paragraph{Realistic language features and garbage collection:}

Static analyzers typically hemorrhage precision in the presence of
exceptions and first-class continuations:
they jump to the top of the lattice of approximation when these
features are encountered.
%
%
Conversion to continuation- and exception-passing style can handle
these features without forcing a dramatic ascent of the lattice of
approximation~\cite{mattmight:Shivers:1991:CFA}.
The cost of this conversion, however, is lost knowledge---both
approaches obscure static knowledge of stack structure, by desugaring
it into syntax.


Might and Shivers introduced the idea of using abstract garbage
collection to improve precision and efficiency in flow
analysis~\cite{mattmight:Might:2006:GammaCFA}.
They develop a garbage collecting abstract machine for a CPS language
and prove it correct.  We extend abstract garbage collection to
direct-style languages interpreted on the CESK machine.




\paragraph{Static stack inspection:}

Most work on the static verification of stack inspection has focused
on type-based
approaches.
Skalka and Smith present a type system for static enforcement of
stack-inspection~\cite{dvanhorn:Skalka2000Static}.
Pottier \etal~present type systems for enforcing
stack-inspection developed via a static correspondence to the dynamic
notion of security-passing
style~\cite{dvanhorn:Pottier2005Systematic}.
Skalka \etal~present type and effect systems that use
linear temporal logic to express regular properties of program traces
and show how to statically enforce both stack- and history-based
security mechanisms~\cite{dvanhorn:Skalka-Smith-VanHorn:JFP08}.
Our approach, in contrast, is not typed-based and focuses only on
stack-inspection, although it seems plausible the approach of
Section~\ref{sec:CM} extends to the more general history-based
mechanisms.

\section{Conclusions and perspective}

We have demonstrated the utility of store-allocated continuations by
deriving novel abstract interpretations of the CEK, a lazy variant of
Krivine's, and the stack-inspecting CM machines.  These abstract
interpreters are obtained by a straightforward pointer refinement and
structural abstraction that bounds the address space, making the
abstract semantics safe and computable.  Our technique allows concrete
implementation technology to be mapped straightforwardly into that of
static analysis, which we demonstrated by incorporating abstract
garbage collection and optimizations to avoid abstract space leaks,
both of which are based on existing accounts of concrete GC and space
efficiency.  Moreover, the abstract interpreters properly model
tail-calls by virtue of their concrete counterparts being properly
tail-call optimizing.  Finally, our technique uniformly scales up to
richer language features.  We have supported this by extending the
abstract CESK machine to analyze conditionals, first-class control,
exception handling, and state.  We speculate that store-allocating
bindings and continuations is sufficient for a straightforward
abstraction of most existing machines.

\paragraph{Acknowledgments:}  

We thank Matthias Felleisen, Jan Midtgaard, and Sam Tobin-Hochstadt
for discussions and suggestions.
We also thank the anonymous reviewers for their close reading and
helpful critiques; their comments have improved this paper.

\bibliographystyle{plain}

\begin{thebibliography}{10}

\bibitem{dvanhorn:Ager2004Functional}
Mads~S. Ager, Olivier Danvy, and Jan Midtgaard.
\newblock A functional correspondence between call-by-need evaluators and lazy
  abstract machines.
\newblock {\em Information Processing Letters}, 90(5):223--232, June 2004.

\bibitem{dvanhorn:ayers-phd93}
Andrew~E. Ayers.
\newblock {\em Abstract analysis and optimization of Scheme}.
\newblock PhD thesis, Massachusetts Institute of Technology, 1993.

\bibitem{dvanhorn:Clements2004Tailrecursive}
John Clements and Matthias Felleisen.
\newblock A tail-recursive machine with stack inspection.
\newblock {\em ACM Trans. Program. Lang. Syst.}, 26(6):1029--1052, November
  2004.

\bibitem{dvanhorn:Clements2001Modeling}
John Clements, Matthew Flatt, and Matthias Felleisen.
\newblock Modeling an algebraic stepper.
\newblock In {\em ESOP '01: Proceedings of the 10th European Symposium on
  Programming Languages and Systems}, pages 320--334, 2001.

\bibitem{dvanhorn:Cousot98-5}
Patrick Cousot.
\newblock The calculational design of a generic abstract interpreter.
\newblock In M.~Broy and R.~Steinbr{\"{u}}ggen, editors, {\em Calculational
  System Design}. 1999.

\bibitem{mattmight:Cousot:1977:AI}
Patrick Cousot and Radhia Cousot.
\newblock Abstract interpretation: {A} unified lattice model for static
  analysis of programs by construction or approximation of fixpoints.
\newblock In {\em Conference Record of the Fourth ACM Symposium on Principles
  of Programming Languages}, pages 238--252, 1977.

\bibitem{mattmight:Cousot:1979:Galois}
Patrick Cousot and Radhia Cousot.
\newblock Systematic design of program analysis frameworks.
\newblock In {\em POPL '79: Proceedings of the 6th ACM SIGACT-SIGPLAN Symposium
  on Principles of Programming Languages}, pages 269--282, 1979.

\bibitem{dvanhorn:Danvy:DSc}
Olivier Danvy.
\newblock {\em An Analytical Approach to Program as Data Objects}.
\newblock {DSc} thesis, Department of Computer Science, Aarhus University,
  October 2006.

\bibitem{dvanhorn:Faxen1995Optimizing}
Karl Fax\'{e}n.
\newblock Optimizing lazy functional programs using flow inference.
\newblock In {\em Static Analysis}, pages 136--153. 1995.

\bibitem{mattmight:Felleisen:1987:Dissertation}
Matthias Felleisen.
\newblock {\em The Calculi of Lambda-v-{CS} Conversion: {A} Syntactic Theory of
  Control and State in Imperative Higher-Order Programming Languages}.
\newblock PhD thesis, Indiana University, 1987.

\bibitem{dvanhorn:Felleisen2009Semantics}
Matthias Felleisen, Robert~B. Findler, and Matthew Flatt.
\newblock {\em Semantics Engineering with {PLT} Redex}.
\newblock August 2009.

\bibitem{mattmight:Felleisen:1986:CEK}
Matthias Felleisen and Daniel~P. Friedman.
\newblock Control operators, the {SECD}-machine, and the lambda-calculus.
\newblock In {\em 3rd Working Conference on the Formal Description of
  Programming Concepts}, August 1986.

\bibitem{dvanhorn:Felleisen1987Calculus}
Mattias Felleisen and Daniel~P. Friedman.
\newblock A calculus for assignments in higher-order languages.
\newblock In {\em POPL '87: Proceedings of the 14th ACM SIGACT-SIGPLAN
  Symposium on Principles of Programming Languages}, pages 314+, 1987.

\bibitem{dvanhorn:Flanagan1993Essence}
Cormac Flanagan, Amr Sabry, Bruce~F. Duba, and Matthias Felleisen.
\newblock The essence of compiling with continuations.
\newblock In {\em PLDI '93: Proceedings of the ACM SIGPLAN 1993 Conference on
  Programming Language Design and Implementation}, pages 237--247, June 1993.

\bibitem{mattmight:Harrison:1989:Parallelization}
Williams~L. Harrison.
\newblock The interprocedural analysis and automatic parallelization of scheme
  programs.
\newblock {\em LISP and Symbolic Computation}, 2(3):179--396, October 1989.

\bibitem{dvanhorn:Jones2007Flow}
N.~Jones and N.~Andersen.
\newblock Flow analysis of lazy higher-order functional programs.
\newblock {\em Theoretical Computer Science}, 375(1-3):120--136, May 2007.

\bibitem{mattmight:Jones:1981:LambdaFlow}
Neil~D. Jones.
\newblock Flow analysis of lambda expressions (preliminary version).
\newblock In {\em Proceedings of the 8th Colloquium on Automata, Languages and
  Programming}, pages 114--128, 1981.

\bibitem{dvanhorn:Jones1982Flexible}
Neil~D. Jones and Steven~S. Muchnick.
\newblock A flexible approach to interprocedural data flow analysis and
  programs with recursive data structures.
\newblock In {\em POPL '82: Proceedings of the 9th ACM SIGPLAN-SIGACT Symposium
  on Principles of Programming Languages}, pages 66--74, 1982.

\bibitem{dvanhorn:Krivine1985Un}
Jean-Louis Krivine.
\newblock Un interpr\'{e}teur du lambda-calcul.
\newblock 1985.

\bibitem{dvanhorn:Krivine2007Callbyname}
Jean-Louis Krivine.
\newblock A call-by-name lambda-calculus machine.
\newblock {\em Higher-Order and Symbolic Computation}, 20(3):199--207,
  September 2007.

\bibitem{dvanhorn:landin-64}
Peter~J. Landin.
\newblock The mechanical evaluation of expressions.
\newblock {\em The Computer Journal}, 6(4):308--320, 1964.

\bibitem{dvanhorn:midtgaard-07}
Jan Midtgaard.
\newblock Control-flow analysis of functional programs.
\newblock Technical Report BRICS RS-07-18, DAIMI, Department of Computer
  Science, University of Aarhus, December 2007.
\newblock To appear in revised form in ACM Computing Surveys.

\bibitem{dvanhorn:midtgaard-jensen-sas-08}
Jan Midtgaard and Thomas Jensen.
\newblock A calculational approach to control-flow analysis by abstract
  interpretation.
\newblock In Mar\'{\i}a Alpuente and Germ\'{a}n Vidal, editors, {\em SAS},
  volume 5079 of {\em Lecture Notes in Computer Science}, pages 347--362, 2008.

\bibitem{dvanhorn:Midtgaard2009Controlflow}
Jan Midtgaard and Thomas~P. Jensen.
\newblock Control-flow analysis of function calls and returns by abstract
  interpretation.
\newblock In {\em ICFP '09: Proceedings of the 14th ACM SIGPLAN International
  Conference on Functional Programming}, pages 287--298, 2009.

\bibitem{mattmight:Might:2006:GammaCFA}
Matthew Might and Olin Shivers.
\newblock Improving flow analyses via \({\Gamma}\){CFA}: Abstract garbage
  collection and counting.
\newblock In {\em ICFP '06: Proceedings of the Eleventh ACM SIGPLAN
  International Conference on Functional Programming}, pages 13--25, 2006.

\bibitem{dvanhorn:Pottier2005Systematic}
Fran\c{c}ois Pottier, Christian Skalka, and Scott Smith.
\newblock A systematic approach to static access control.
\newblock {\em ACM Trans. Program. Lang. Syst.}, 27(2):344--382, March 2005.

\bibitem{mattmight:Sestoft:1991:Dissertation}
Peter Sestoft.
\newblock {\em Analysis and efficient implementation of functional programs}.
\newblock PhD thesis, University of Copenhagen, October 1991.

\bibitem{dvanhorn:Shao1994Spaceefficient}
Zhong Shao and Andrew~W. Appel.
\newblock Space-efficient closure representations.
\newblock In {\em LFP '94: Proceedings of the 1994 ACM Conference on LISP and
  Functional Programming}, pages 150--161, 1994.

\bibitem{mattmight:Shivers:1991:CFA}
Olin~G. Shivers.
\newblock {\em Control-Flow Analysis of Higher-Order Languages}.
\newblock PhD thesis, Carnegie Mellon University, 1991.

\bibitem{dvanhorn:Skalka2000Static}
Christian Skalka and Scott Smith.
\newblock Static enforcement of security with types.
\newblock In {\em ICFP '00: Proceedings of the fifth ACM SIGPLAN International
  Conference on Functional Programming}, pages 34--45, September 2000.

\bibitem{dvanhorn:Skalka-Smith-VanHorn:JFP08}
Christian Skalka, Scott Smith, and David~Van Horn.
\newblock Types and trace effects of higher order programs.
\newblock {\em Journal of Functional Programming}, 18(02):179--249, 2008.

\end{thebibliography}

\end{document}